\documentclass[a4paper,twocolumn,11pt]{quantumarticle}
\pdfoutput=1
\usepackage{graphicx}
\usepackage{epsfig}
\usepackage{amsfonts}
\usepackage{amsmath}
\usepackage{amssymb}
\usepackage{amsthm}

\usepackage{color}
\usepackage[colorlinks=true,linkcolor=blue,citecolor=magenta,urlcolor=blue]{hyperref}
\usepackage{comment}
\usepackage{braket}
\usepackage{dsfont}

\newtheorem{corrolary}{Corrolary}
\newtheorem{definition}{Definition}

\newtheorem{proposition}{Proposition}

\newtheorem{remark}{Remark}
\newtheorem{theorem}{Theorem}

\newcommand{\tr}{{\rm tr}}

\usepackage{tikz}
\usetikzlibrary{shapes.geometric}
\usetikzlibrary{intersections,calc}
\usepackage{caption}
\usepackage{subcaption}
\usepackage{float}


\usepackage{enumerate}


\begin{document}


\title{Perfect quantum strategies with small input cardinality}


\author{Stefan Trandafir}
\email{strandafir@us.es}
\affiliation{Departamento de F\'{\i}sica Aplicada II, Universidad de Sevilla, E-41012 Sevilla, Spain}

\author{Junior R. Gonzales-Ureta}
\affiliation{Department of Physics, Stockholm University, 10691 Stockholm, Sweden}

\author{Ad\'an Cabello}
\email{adan@us.es}
\affiliation{Departamento de F\'{\i}sica Aplicada II, Universidad de Sevilla, E-41012 Sevilla, Spain}
\affiliation{Instituto Carlos~I de F\'{\i}sica Te\'orica y Computacional, Universidad de Sevilla, E-41012 Sevilla, Spain}

\maketitle


\begin{abstract}
A perfect strategy is one that allows the mutually in-communicated players of a nonlocal game to win every trial of the game. Perfect strategies are basic tools for some fundamental results in quantum computation and crucial resources for some applications in quantum information. 
Here, we address the problem of producing qudit-qudit perfect quantum strategies with a small number of settings. For that, we exploit a recent result showing that any perfect quantum strategy induces a Kochen-Specker set. We identify a family of KS sets in even dimension $d \ge 6$ that, for many dimensions, require the smallest number of orthogonal bases known: $d+1$. 
This family was only defined for some $d$. We first extend the family to infinitely many more dimensions.
Then, we show the optimal way to use each of these sets to produce a bipartite perfect strategy with minimum input cardinality. As a result, we present a family of perfect quantum strategies in any $(2,d-1,d)$ Bell scenario, with $d = 2^kp^m$ for $p$ prime, $m \geq k \geq 0$ (excluding $m=k=0$), $d = 8p$ for $p \geq 19$, $d=kp$ for $p > ((k-2)2^{k-2})^2$ whenever there exists a Hadamard matrix of order $k$, other sporadic examples, as well as a recursive construction that produces perfect quantum strategies for infinitely many dimensions $d$ from any dimension $d'$ with a perfect quantum strategy. We identify their associated Bell inequalities and prove that they are not tight, which provides a second counterexample to a conjecture of 2007.
\end{abstract}


\section{Introduction}


\subsection{Motivation}


The qubit is the basic unit of quantum information \cite{Schumacher:1995PRA}. Bell nonlocality \cite{Bell:1964PHY,Clauser:1969PRL} is arguably the most powerful resource of quantum theory. Interestingly, ``maximal'' bipartite Bell nonlocality {\em cannot} be achieved with qubit-qubit correlations, but requires qudit-qudit correlations with $d \ge 3$.

There are several ways to quantify nonlocality. However, it has recently been shown \cite{Liu:2023XXX} that nonlocality is maximal at the same time in, at least, four of them: (I) When nonlocality is quantified through the nonlocal fraction \cite{Elitzur:1992PLA}. In this case, maximum nonlocality corresponds to nonlocal fraction $1$. Then, the correlation is said to be {\em fully nonlocal} (FN) \cite{Aolita:2012PRA}. (II) When nonlocality is quantified by the minimum distance to a face of the nonsingnaling polytope without local points \cite{Goh2018PRA}. In this case, maximum nonlocality occurs when the correlation is in a face of the nonsingnaling polytope without local points. Then, the correlation is said to be {\em face nonsignaling} (FNS) \cite{Liu:2023XXX}. (III) When nonlocality is quantified based on whether or not it allows for a GHZ-like proof \cite{GHZ89} of Bell theorem. In this case, maximum nonlocality occurs when the correlation allows for such a proof. Then, the correlation is said to allow an {\em ``all-versus-nothing''} (AVN) \cite{Cabello:2001PRLb} proof of nonlocality. (IV) When nonlocality is quantified by the wining probability of a nonlocal game whose classical winning probability is smaller. In this case, maximum nonlocality occurs when the correlation allows the players to always win the game. Then, the correlation is said to allow for a {\em perfect quantum strategy} (PQS) \cite{CHTW04} or {\em ``quantum pseudo-telepathy''} \cite{GBT05,BroadbentPhD2008}. For precise definitions of FN, FNS, and AVN, we refer the reader to Appendix~\ref{app.Correlationstrength}. A detailed definition of PQS is presented below. In this paper, we will focus and use the language of PQSs. However, our results can also be read in terms of FN correlations, FNS correlations, and AVN proofs. 

In the bipartite case, a nonlocal game \cite{Aravind:2004AJP} $G = (X \times Y, A \times B, \pi, W)$ is characterized by: (i) Two input sets: $X$ for the first player, Alice, and $Y$ for the second player, Bob. (ii) Two output sets: $A$ for Alice and $B$ for Bob. (iii) A distribution of probability for the inputs: $\pi(X \times Y)$. (iv) The winning condition, i.e., the condition that inputs and outputs must satisfy for winning the game: $W(X \times Y \times A \times B) \in \{0,1\}$, where $W=1$ means winning and $W=0$ means losing. Therefore, the winning probability of the game is
\begin{equation}
\label{win}
\omega(G) = \sum_{x,y,a,b} \pi(x,y) p(a,b|x,y) W(a,b,x,y).
\end{equation} 
The game admits a PQS if there is a quantum correlation $p(a,b|x,y)$, where $a \in A$, $b\in B$, $x \in X$, and $y \in Y$, such that $\omega(G)=1$. 

PQSs have a special status in foundations of quantum computation and play a crucial role in the proofs of some fundamental results such as the quantum computational advantage for shallow circuits \cite{Bravyi:2018SCI}, MIP$^\ast$=RE \cite{Ji:2021CACM}, and the impossibility of classically simulating quantum correlations with arbitrary relaxations of measurement and parameter independence \cite{Vieira;2024XXX}.

For $n \ge 3$, $n$-partite PQSs can be achieved with qubits \cite{Mermin:1990PT} and there are systematic methods to produce them for any number of parties \cite{PhysRevLett.95.120405,Cabello:2008PRA}. However, for $n=2$, PQSs require pairs of qudits with $d \ge3$ \cite{Brassard:2005}. 
Bipartite PQSs are therefore a fundamental motivation for exploring high-dimensional quantum correlations (other motivations are faster data rates, improved communication security \cite{BechmannPasquinucci:2000PRL}, higher resistance to noise \cite{Cerf:PRL2002}, and lower detection efficiencies to attain the loophole-free regime \cite{ZSSLPC2023}).

On the other hand, bipartite PQSs require more than two measurements per party \cite{Gisin:2007IJQI} and more than two outcomes per measurement (at least for some measurements) \cite{CHTW04}. 

Three types of bipartite PQSs are known:
\begin{enumerate}
    \item Bipartite PQSs based on qudit-qudit maximally entangled states and Kochen-Specker (KS) \cite{Kochen:1967JMM} sets of rays in dimension $d \ge 3$ \cite{Stairs:1983PS,HR83,Brown:1990FPH,Elby:1992PLA,Renner2004b,CHTW04,Aolita:2012PRA}. KS sets are defined in Sec.~\ref{sec.N=7}.
    \item Bipartite PQSs based on magic sets (as defined in \cite{Arkhipov:2012XXX,Trandafir:2022PRL}) of Pauli observables \cite{Cabello:2001PRLb,Mancinska:2007} or parallel repetitions of them \cite{Coladangelo2016arxiv,Coudron2016arxiv,Araujo:2020Quantum}. Magic sets are related to KS sets \cite{Peres:1991JPA}.
    \item Bipartite PQSs based on state-independent contextuality sets of vectors \cite{Kleinmann:2012PRL} whose corresponding orthogonality graph has fractional packing number equal to the Lov\'asz number and larger than the independence number \cite{ZSSLPC2023}. In light of the one-to-one connection between PQSs and KS sets \cite{Cabello:2023XXX} (see Appendix~\ref{app.connection}), these sets must be KS sets.
\end{enumerate}

The simplest known example of a bipartite PQS, and probably the simplest allowed by quantum theory
\cite{Cabello:2023XXX}, occurs in the $(2,3,4),$ Bell scenario, where there are $2$ parties and each of them has $3$ settings with $4$ outcomes. This PQS requires Alice and Bob to share a ququart-ququart maximally entangled state. It was introduced in \cite{Cabello:2001PRLa,Cabello:2001PRLb} and makes use of the Peres-Mermin two-qubit proof of the KS theorem \cite{Peres:1990PLA,Mermin:1990PRLb}. This PQS is usually called the ``magic square'' correlation \cite{Aravind:2004AJP} and has been experimentally tested using photonic hyperentanglement: either polarization-path hyperentanglement \cite{CinelliPRL2005,YangPRL2005,Aolita:2012PRA} or polarization-orbital angular momentum hyperentanglement \cite{Xu:2022PRL}. To our knowledge, no other bipartite PQS has been experimentally tested.

The aim of this work is to show that, for infinitely many (finite) {\em even} dimensions 
$d \ge 4$, there are qudit-qudit PQSs using only $d-1$ local settings per party. The case $d = 4$, the magic square, is well-known, but the new ones, which are not {\em directly} connected to the case $d = 4$, form an especially beautiful and seemingly fundamental family of PQSs.


\subsection{Structure}


In Sec.~\ref{sec.TheGame}, we introduce a family of bipartite nonlocal games, the ``colored odd-pointed star games''. What makes this family special is explained in Secs.~\ref{sec.InfiniteClass} and~\ref{sec.Optimality}.

In Sec.~\ref{sec.InfiniteClass}, we show that that, infinitely many of the ``stars'' used in these games can be associated to a KS set in dimension~$d$. The interest of these KS sets is triple: they are very symmetric (their graph of orthogonality is vertex transitive), the impossibility of a KS assignment is easy to check (as they allow for a ``parity proof'' KS set \cite{Cabello:1996PLA,LisonekPRA2014}), and they have a small number of bases ($d+1$, which implies that they are the KS sets with the smallest number of bases known in each dimension $d$). As we will see, this last property is crucial for our purpose of identifying bipartite perfect quantum strategies with small input cardinality.
However, this extended family of KS sets is interesting in its own right, as is the fact that there is a connection between some of its new members and the smallest KS set in quantum theory \cite{Cabello:1996PLA,Xu:2020PRL}, which happens to be in $d=4$. Moreover, this family seems to be the ``natural'' (although nontrivial) extension to any even $d$ of the smallest KS. For all these reasons, these results are presented in a section, Sec.~\ref{sec.InfiniteClass}, that can be read independently from the rest of the article.

In Sec.~\ref{sec.Optimality}, we prove that colored odd-pointed star games correspond to the optimal ways of distributing the KS sets of Sec.~\ref{sec.InfiniteClass} between two players, while achieving a PQS with minimal input cardinality. 

In Sec.~\ref{sec.BellIneqs}, we discuss the Bell inequalities corresponding to the PQSs introduced in Secs.~\ref{sec.TheGame} and~\ref{sec.Optimality} and prove that, unlike the Bell inequalities corresponding to the magic square \cite{Cabello:2001PRLb,Gisin:2007IJQI} and GHZ \cite{Mermin:1990PRLa} games, the Bell inequalities corresponding to the new PQSs are not tight. This provides a second counterexample \cite{Liu:2023XXX} to a conjecture formulated in \cite{Gisin:2007IJQI}.

In Sec.~\ref{sec.Comparison}, we discuss some advantages of the new PQSs with respect to existing PQSs. We also compute the necessary visibility to achieve quantum advantage and compare it the visibility needed for different variants of the colored odd-pointed star games.

Finally, in Sec.~\ref{sec:conc}
we summarize the results and list some open problems.


\section{The colored odd-pointed star games}\label{sec.TheGame}


\subsection{The game}


Here, we introduce a class of bipartite nonlocal games that, in infinitely many cases, allow for a PQS with quantum advantage. We call them \emph{colored odd-pointed star games}.


Let $N$ be an odd number. The referee gives Alice a number $x$ chosen from
\begin{equation}
    X := \{1,2, \dots, N-2\}
\end{equation}
and gives Bob a number $y$ chosen from
\begin{equation}
    Y := \{3, 4, \dots, N\}.
\end{equation}
Alice outputs a number $a \in A$ and Bob outputs a number $b \in B$, where
\begin{eqnarray}
    A &:=& \{1, 2, \dots, N\} \setminus \{x\}, \\
    B &:=& \{1, 2, \dots, N\} \setminus \{y\}.
\end{eqnarray}
See Fig.~\ref{fig:pair_game}. Alice and Bob win when the sets $\{x,a\}$ and $\{y,b\}$ are either the same or disjoint. That is, whenever
\begin{equation}
    |\{x,a\} \cap \{y,b\}| \in \{0,2\}.
\end{equation}
Therefore, the winning condition is
\begin{equation}
    W(x,y,a,b) \equiv 1 + |\{x,a\} \cap \{y,b\}| \mod 2.
\end{equation}


\begin{figure}
    \centering
    \includegraphics[scale=0.7]{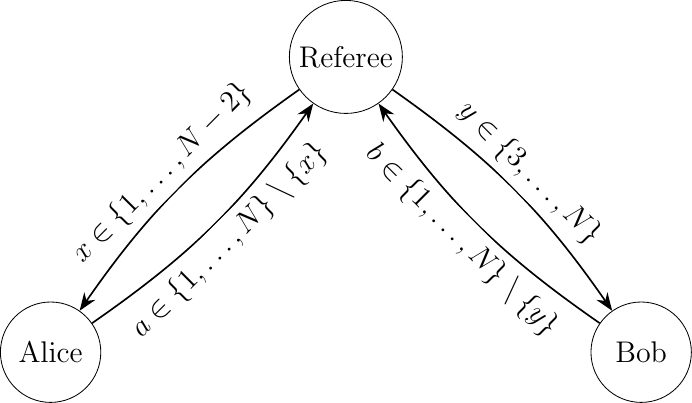}
    \caption{The colored odd-pointed star game. The referee provides Alice with a number $x$ between $1$ and $N-2$ and provides Bob with a number $y$ between $3$ and $N$. Alice (Bob) gives the referee back a number $a$ ($b)$ between $1$ and $N$ that is different from $x$ ($y$). Alice and Bob win whenever $\{a,x\}$ and $\{b,y\}$ are the same or disjoint.}
    \label{fig:pair_game}
\end{figure}


This game has no perfect classical strategy
(see Corollary \ref{cor.Optimal} in Sec.~\ref{sec.InfiniteClass}). However, as we will show in Sec.~\ref{sec.InfiniteClass}, there are infinitely many $N$ for which there is a PQS. 

The colored odd-pointed star game can be viewed as being played on a \emph{configuration}, which is a set of points $P$ and lines $L$ such that each point is incident (i.e., lies on) to the same number of lines, and each line is incident to the same number of points. The configuration that our game is played on is the $J(N,2)$-configuration with odd $N$ (or \emph{$N$-pointed star}). The name follows from the fact that, as we shall see in Sec.~\ref{sec.InfiniteClass}, it arises from the Johnson graph $J(N,2)$. The $J(N,2)$-configuration has $N$ lines labeled $\{1,2,\dots,N\}$ with $N-1$ points each. Each pair Line $i$ and Line $j$ meet at exactly one point which we label by $\{i,j\}$ (or $ij$ for short).  

Alice's (Bob's) inputs are {\em red} ({\em blue}) lines of the $J(N,2)$-configuration. Their outputs correspond to points on their respective lines. For example, if Alice is given (red) Line $5$, Alice's ``output $6$'' corresponds to point $\{5,6\}$ (i.e., the point in the intersection between Line $5$ and Line $6$). The cases $N=7$ and $N=9$ are illustrated in Figs.~\ref{fig:J72-bks} and \ref{fig:J92-bks}, respectively.

Alice and Bob win either if they output the same point or if they output points in different lines. That is, they lose if they output different points in the same line.

The reason for the $J(N,2)$-configuration with odd $N$ is that, as we will see in Sec.~\ref{sec.InfiniteClass}, for infinitely many values on $N$, it can be realized with a KS set in dimension $d=N-1$ and having only $N$ bases. The reason why some lines have only one color while other lines have the two colors is that, as we will see in Sec.~\ref{sec.Optimality}, this minimizes the input cardinality of the PQS. 


\begin{figure*}[htbp]
    \centering
    \begin{subfigure}[b]{\textwidth}
        \centering
         \includegraphics[scale=0.7]{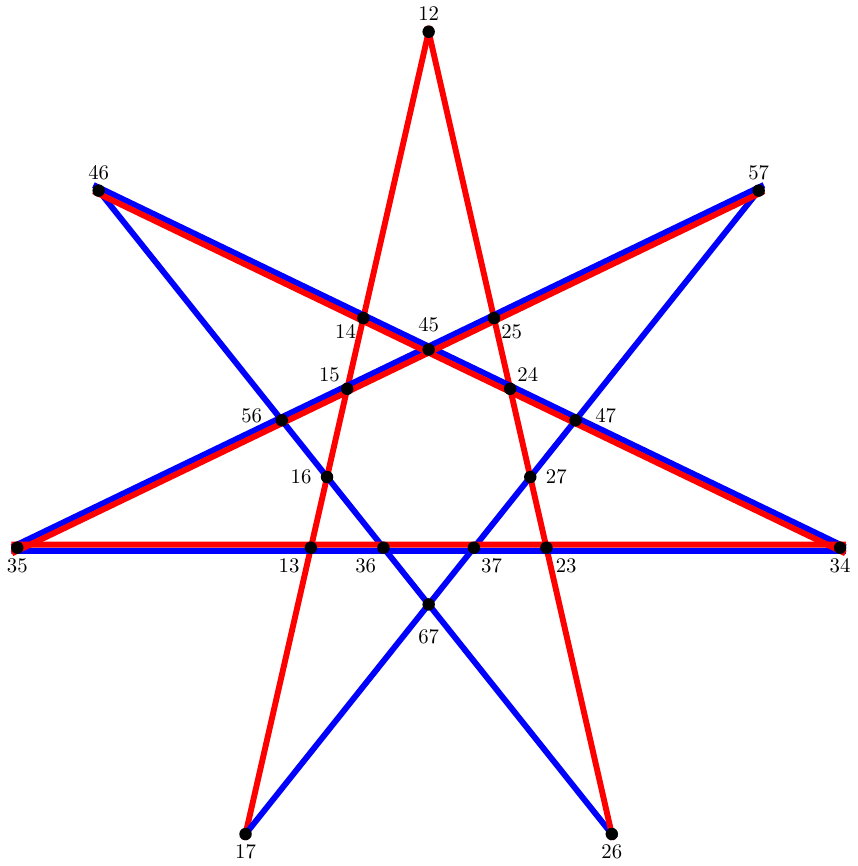}
         \caption{}
         \label{fig:J72-bks}
     \end{subfigure}
     
     \vspace{1cm}
     
     \begin{subfigure}[b]{\textwidth}
        \centering
         \includegraphics[scale=0.7]{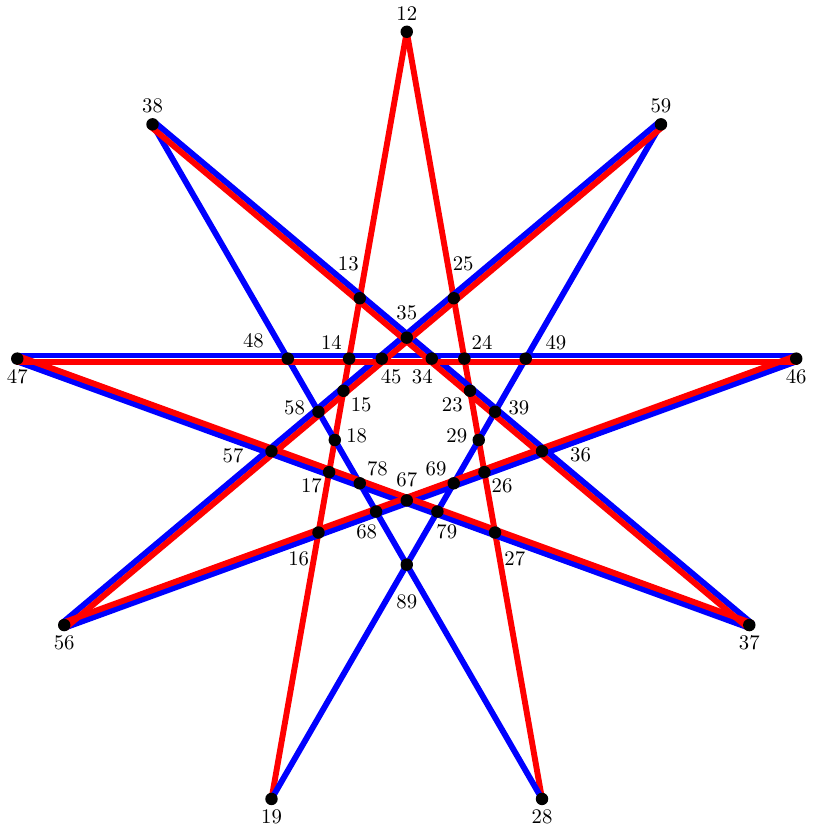}
         \caption{}
         \label{fig:J92-bks}
     \end{subfigure}
     
    \vspace{0.8cm}

     \caption{(a) Configuration associated to the colored odd-pointed star game with $N=7$. (b) Configuration associated to the colored odd-pointed star game with $N=9$. In both cases, each of the red lines $\{1,2,\dots,N-2\}$ corresponds to a possible input for Alice and each of the blue lines $\{3,4,\dots,N\}$ corresponds to a possible input for Bob. Black dots correspond to the possible outputs.}
\end{figure*}


\subsection{Example: N=7} \label{sec.N=7}

 
$N=7$ is the smallest instance of the colored odd-pointed star game where there is a PQS. The best classical strategy yields a winning probability of $24/25$. For example, an optimal classical strategy is the following: Alice gives the output corresponding to the point $12$ when the input corresponds to Line $1$ or to Line $2$, outputs $34$ for inputs $3$ or $4$, and outputs $56$ for input $5$. Bob outputs $34$ for inputs $3$ or $4$, outputs $56$ for inputs $5$ or $6$, and outputs $67$ for input $7$. The strategy is summarized in Table~\ref{tab:classical_7}. This strategy leads to a single failure: when Alice is given Line $5$ and Bob is given Line $7$.


\begin{table}[hbtp] 
    \centering
    \begin{tabular}{|c||c|c|c|c|c|c|c|}
        \hline 
        Line & 1 & 2 & 3 & 4 & 5 & 6 & 7 \\  \hline \hline 
        Alice & 12 & 12 & 34 & 34 & 56 &  &  \\
        Bob &  &  & 34 & 34 & 56 & 56 & 67 \\
        \hline 
    \end{tabular}
    \caption{An optimal classical strategy for the colored odd-pointed star game with $N=7$. For each input in the top row, Alice's and Bob's outputs are shown in the corresponding column.} 
    \label{tab:classical_7}
\end{table}


A PQS is the following. Alice and Bob share the following qudit-qudit maximally entangled state:
\begin{equation}
    \ket{\psi} = \frac{1}{\sqrt{6}}\sum_{k=0}^{5} \ket{k} \otimes \ket{k}.
\end{equation}
If Alice receives Line $j$ as input, she measures on her qudit the orthogonal basis $j$, defined as those vectors in Table~\ref{tab:raysJ72} having $j$ as superindex. This measurement projects Alice's qudit state into one state $v^{i,j}$ (or $v^{j,i}$). Then Alice outputs the number $i$. Similarly, if Bob receives Line $k$, he measures on his qudit the conjugate orthogonal basis $k$, defined as the conjugates of the vectors in Table~\ref{tab:raysJ72} having $k$ as superindex. This projects Bob's qudit state into a state $v^{i',k}$, and then Bob outputs the number $i'$.


\begin{table}
\centering
\begin{tabular}{cc}
    $v^{1, 2} = [0, 0, 0, 0, 0, 0]$ &
    $v^{1, 3} = [1, 2, 0, 2, 0, 1]$ \\
    $v^{1, 4} = [1, 0, 2, 2, 1, 0]$ &
    $v^{1, 5} = [0, 2, 2, 0, 1, 1]$\\
    $v^{1, 6} = [2, 2, 0, 1, 1, 0]$ & 
    $v^{1, 7} = [2, 0, 2, 1, 0, 1]$\\
    $v^{2, 3} = [2, 1, 0, 1, 0, 2]$ &
    $v^{2, 4} = [2, 0, 1, 1, 2, 0]$ \\
    $v^{2, 5} = [0, 1, 1, 0, 2, 2]$ &
    $v^{2, 6} = [1, 1, 0, 2, 2, 0]$ \\
    $v^{2, 7} = [1, 0, 1, 2, 0, 2]$ &
    $v^{3, 4} = [2, 2, 2, 1, 1, 1]$ \\
    $v^{3, 5} = [1, 1, 2, 2, 1, 2]$ &
    $v^{3, 6} = [0, 1, 0, 0, 1, 1]$ \\
    $v^{3, 7} = [0, 2, 2, 0, 0, 2]$ &
    $v^{4, 5} = [1, 2, 1, 2, 2, 1]$ \\
    $v^{4, 6} = [0, 2, 2, 0, 2, 0]$ &
    $v^{4, 7} = [0, 0, 1, 0, 1, 1]$\\
    $v^{5, 6} = [2, 1, 2, 1, 2, 1]$ &
    $v^{5, 7} = [2, 2, 1, 1, 1, 2]$ \\
    $v^{6, 7} = [1, 2, 2, 2, 1, 1]$
\end{tabular}
\caption{A quantum realization of $J(7,2)$ in dimension $d=6$. $[1, 2, 0, 2, 0, 1]$ is the abbreviation of the vector $(\zeta_3^1, \zeta_3^2, \zeta_3^0, \zeta_3^2, \zeta_3^0, \zeta_3^1)$, where $\zeta_3 = e^{\frac{2\pi i}{3}}$.}
\label{tab:raysJ72}
\end{table}


That a set of $21$ $6$-dimensional vectors exists such that each point in the $J(7,2)$ configuration can be assigned to a different vector and such that vectors in the same line are mutually orthogonal 
was presented in \cite{Lisonek2019}. We will refer to this set of vectors as a quantum realization of $J(7,2)$. Interestingly, this set is a KS set.

\begin{definition}[Kochen-Specker set]
\label{def:ks} 
A Kochen-Specker (KS) set is a finite set of vectors $\mathcal{V}$ in a Hilbert space of finite dimension $d \ge 3$, which does not admit an assignment $f: \mathcal{V} \rightarrow \{0,1\}$ satisfying: (I)~Two mutually orthogonal vectors $v_i$ and $v_j$ cannot have $f(v_i)=f(v_j)=1$. (II)~For every orthogonal basis (set
of $d$ mutually orthogonal vectors) $\{v_i\}_{i=1}^d$, $\sum_i f(v_i)=1$.
\end{definition}

It is sometimes useful to refer to a KS set as a pair $(\mathcal{V}, \mathcal{B})$, where $\mathcal{V}$ is the finite set of vectors and $\mathcal{B}$ is a specific subset of the orthogonal bases formed by elements of $\mathcal{V}$. 
The reason is that, in some cases, the no existence of $f : \mathcal{V} \to \{0,1\}$ satisfying (I) and (II) follows from the no existence of $f$ satisfying (II) for the orthogonal bases in $\mathcal{B}$.

The way to prove that $f$ does not exist for any quantum realization of $J(7,2)$ in dimension~$6$ is by noticing that there is an {\em odd} number ($7$) of lines (orthogonal bases). Therefore, there are $7$ equations of the type $\sum_i f(v_i)=1$. If we add them all, then the right-hand side is an odd number ($7$). However, this is impossible to achieve since each vector is in an {\em even} number (two) of lines. Sets that can be proven to be KS sets with a simple parity argument of this type are said to admit a \emph{parity proof}.

Interestingly, the KS set in Table~\ref{tab:raysJ72} is the KS set with the smallest number of vectors (and, more interestingly for us, bases) known in $d=6$. Moreover, it has been proven that it is the KS set with the smallest number of vectors in $d=6$ that admits a symmetric parity proof \cite{Lisonek2019}. The set is also symmetric in another sense: the graph $J(7,2)$ is vertex transitive (i.e., its automorphism group acts transitively on its vertices).


\section{An infinite family of symmetric KS sets with a small number of bases} \label{sec.InfiniteClass}


In this section we prove that, for infinitely many $N$, there are $N-1$-dimensional vectors such that each point in the $J(N,2)$-configuration can be assigned to a different vector such that vectors in the same line are mutually orthogonal. This implies that, for these values of $N$, there is a KS set in dimension $N-1$ and $N$ bases (and admitting a symmetric parity proof, and associated to a vertex-transitive graph). As we will see in Sec.~\ref{sec.Optimality}, this also implies that, for these values of $N$, there is a bipartite PQS with quantum advantage with $N-2$ inputs per party and $N-1$ outputs.


\subsection{Construction of the KS sets}


The problem of for which odd $N$ there is a quantum realization in dimension $N-1$ for the $J(N,2)$-configuration is a special case of a well-studied problem in graph theory - that of computing orthogonal representations. Given a graph $G=(V,E)$, where $V$ is the vertex set and $E$ is the edge set, an \emph{orthogonal representation} (OR) in $\mathbb{C}^d$ of $G$ is a labeling of the vertices of $G$ with non-zero vectors in $\mathbb{C}^d$ such that adjacent vertices are labeled by orthogonal vectors. Note that some papers use a different definition in which non-adjacent vertices are labeled by orthogonal vectors. The \emph{dimension} of the orthogonal representation is $d$, and the \emph{orthogonal rank} is the minimum dimension $d$ for which there is an orthogonal representation of $G$ in $\mathbb{C}^d$. The orthogonal rank is denoted by $\xi(G)$. For general graphs, computing orthogonal representations of minimal dimension is very difficult and little is known. In fact, there is not even an efficient algorithm to compute $\xi(G)$ (for example, deciding whether $\xi(G) \leq t$ for some positive integer $t \geq 3$ is NP-hard in the real case - i.e., substituting $\mathbb{C}$ for $\mathbb{R}$ - see \cite[Theorem 3.1]{Peeters:1996}). Bounds exist for $\xi(G)$: 
\begin{equation} \label{eq.xi_bounds}
    \omega(G) \leq \xi(G) \leq \chi(G),
\end{equation}
where $\omega(G)$ is the \emph{clique number} (maximum number of mutually adjacent vertices), and $\chi(G)$ is the \emph{chromatic number} (fewest colors needed in order to color each vertex so that adjacent vertices are assigned different colors). 
 
The \emph{Johnson graph} $J(N,m)$ is the graph whose vertices are the $m$-subsets of $\{1,\dots, N\}$, where an edge exists between vertices $S_1$ and $S_2$ if and only if $S_1$ and $S_2$ are not disjoint. Orthogonal representations of $J(N,2)$ in dimension $N-1$ are KS sets. Here, the bounds given in \eqref{eq.xi_bounds} are useful: for $J(N,2)$ with $N$ odd, $N \geq 5$, it is known that $\omega(G) = N-1$ and $\chi(G) = N$, so that the minimum dimension $d$ of any orthogonal representation must be one of $N-1$ or $N$. Note that in the $N=3$ case we have $\omega(G) = N$, so that there is no orthogonal representation in dimension $N-1$ (and thus no associated KS set). The previous discussion can be summarized as follows:
\begin{proposition}
For odd $N$, $\xi(J(N,2))~\in~\{N-1, N\}$.
    If $\xi(J(N,2)) = N-1$, then there is a KS set in dimension $N-1$.
\end{proposition}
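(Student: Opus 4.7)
The plan is to derive the proposition from the sandwich $\omega(J(N,2)) \leq \xi(J(N,2)) \leq \chi(J(N,2))$ stated in~\eqref{eq.xi_bounds}, combined with the parity template already demonstrated for $N = 7$ in Sec.~\ref{sec.N=7}. First I would make the two extremes of this sandwich explicit. For the clique number, the ``star'' $\{S \in \binom{[N]}{2} : i \in S\}$ of $2$-subsets through a fixed element $i \in [N]$ is a pairwise-intersecting family of size $N-1$, and any pairwise-intersecting family of $2$-subsets of $[N]$ is either such a star or the triangle $\{\{a,b\},\{a,c\},\{b,c\}\}$ of size $3$, so for $N \geq 5$ the star dominates and $\omega(J(N,2)) = N-1$. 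For the chromatic number, a proper vertex colouring of $J(N,2)$ is by definition a proper edge colouring of $K_N$, so $\chi(J(N,2)) = \chi'(K_N)$; since a $1$-factorisation of $K_N$ exists iff $N$ is even, for odd $N$ the standard fact $\chi'(K_N) = N$ applies, giving $\chi(J(N,2)) = N$. Plugging these into~\eqref{eq.xi_bounds} pins $\xi(J(N,2))$ in $\{N-1, N\}$.

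Next, assuming $\xi(J(N,2)) = N-1$, I would extract the bases directly from an orthogonal representation $v : \binom{[N]}{2} \to \mathbb{C}^{N-1} \setminus \{0\}$. For each $i \in [N]$, the vectors $B_i := \{v(S) : i \in S\}$ are $N-1$ pairwise orthogonal nonzero elements of $\mathbb{C}^{N-1}$ (any two of the underlying $2$-subsets intersect at $i$, and are therefore mapped to orthogonal vectors by the definition of an orthogonal representation), so they form an orthogonal basis of $\mathbb{C}^{N-1}$. This yields a system of $N$ explicit orthogonal bases built from the $\binom{N}{2}$ representation vectors.

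The KS property would then follow by the parity argument. Suppose an assignment $f$ satisfying Definition~\ref{def:ks}(I)--(II) existed; summing the constraint $\sum_{v \in B_i} f(v) = 1$ over $i \in [N]$ produces $N$ on the right-hand side, whereas on the left-hand side each vertex $v(S)$ appears in exactly $|S| = 2$ bases, so the double sum equals $2 \sum_S f(v(S))$, an even integer. For odd $N$ this is a contradiction, so no such $f$ exists and the orthogonal representation is a KS set in dimension $N-1$.

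I do not anticipate a genuine obstacle: the argument is a packaging of standard combinatorial facts (classification of pairwise-intersecting $2$-families, $\chi'(K_N) = N$ for odd $N$) with the Lison\v{e}k-style parity template of Sec.~\ref{sec.N=7}. The only minor wrinkles are handling the $N=3$ corner, where $J(3,2)$ is itself $K_3$ and forces $\xi = 3 = N$ (still consistent with the claim and making the conditional in the second sentence vacuous), and noting that the parity argument actually rules out functions obeying just property (II) on the specific bases $B_i$, which is slightly stronger than what Definition~\ref{def:ks} demands.
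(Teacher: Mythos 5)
Your proof is correct and follows the same route as the paper: the sandwich $\omega(G) \leq \xi(G) \leq \chi(G)$ from~\eqref{eq.xi_bounds}, the evaluations $\omega(J(N,2))=N-1$ and $\chi(J(N,2))=N$ for odd $N\geq 5$, and the odd-number-of-bases parity argument to certify the KS property. The only differences are cosmetic: you supply short proofs of the two combinatorial evaluations that the paper simply cites as known, and you restate in general the parity argument that the paper presents only for the $N=7$ case.
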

Showing that $\xi(J(N,2)) = N-1$ amounts to proving that a KS exists. However, one still needs to explicitly construct the orthogonal representation in order to obtain the KS set.

In the next subsection we will provide a construction that works for infinitely many dimensions. In the subsection following that, we present a novel recursive construction that allows one to cover infinitely many dimensions from any known dimensions.


\subsection{The Jungnickel-Lison\v ek construction} \label{sec:jungnickel-lisonek}


For each dimension $d=2^{k}p^{m}$, with $m \geq k \geq 1$, it is possible to explicitly construct an orthogonal representation of $J(d+1,2)$ in $\mathbb{C}^d$.
The construction uses a result of Jungnickel \cite{Jungnickel79} in order to build a generalized Hadamard matrix $GH(2p, 2)$ and then produce the orthogonal representation via a pair of results of Lison\v ek \cite{Lisonek2019}.

A \emph{generalized Hadamard matrix} $GH(g, \lambda)$ over a group $G$ of order $g$ is a $g\lambda \times g\lambda$ matrix whose entries are elements of $G$ and for which each element of $G$ occurs exactly $\lambda$ times in the difference of any pair of rows. 
For example, the matrix
\begin{equation} \label{eg:gen-hadamard}
D^{(3)} :=
\begin{pmatrix}
0 & 0 & 0 & 0 & 0 & 0 \\
1 & 2 & 0 & 2 & 0 & 1 \\
1 & 0 & 2 & 2 & 1 & 0 \\
0 & 2 & 2 & 0 & 1 & 1 \\
2 & 2 & 0 & 1 & 1 & 0 \\
2 & 0 & 2 & 1 & 0 & 1
\end{pmatrix}
\end{equation}
is a generalized Hadamard matrix $GH(3,2)$ over $\mathbb{Z}_3$. The reader may notice a resemblance between the matrix $D^{(3)}$ and the KS set given in Table~\ref{tab:raysJ72}. As we shall see in this section, the KS set is constructed from the matrix.

By $EA(q)$ we denote the group of order $q$ obtained by the direct product of groups of prime order (for example, if $q = 12$, then $EA(12) = \mathbb{Z}_2 \oplus \mathbb{Z}_2 \oplus \mathbb{Z}_3$). The following theorem describes Jungnickel's construction of generalized Hadamard matrices.

\begin{theorem} \cite[Theorem 2.4]{Jungnickel79} \label{theo:Jungnickel}
    Let $q$ be an odd prime power and let $c$ be any non-square element of $GF(q)$. Fix an ordering $\sigma$ on the elements of $GF(q)$. Define matrices $A_i = (a^{i}_{\sigma(x), \sigma(y)})$ for $i=1, \dots, 4$ via
    \begin{align}
        a^{1}_{\sigma(x),\sigma(y)} &= xy + (x^2/4), \\ a^{2}_{\sigma(x), \sigma(y)} &= xy + (cx^2/4), \\
        a^3_{\sigma(x), \sigma(y)} &= xy - y^2 - (x^2/4), \\ a^4_{\sigma(x), \sigma(y)} &= [xy - y^2 - (x^2/4)]/c
    \end{align}
    and set 
    \begin{equation}
        D := \begin{pmatrix}
            A_1 & A_2 \\
            A_3 & A_4
        \end{pmatrix}.
    \end{equation}
    Then $D$ is a generalized Hadamard matrix $GH(q, 2)$ over $EA(q)$.
\end{theorem}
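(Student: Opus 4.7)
The plan is to verify directly that, for every pair of distinct rows of $D$, the list of componentwise differences takes each element of $GF(q)$ exactly twice. Since $D$ is partitioned into four $q \times q$ blocks, we split into three cases: (i) both rows in the top half $(A_1\,|\,A_2)$, (ii) both rows in the bottom half $(A_3\,|\,A_4)$, and (iii) one row from each half. Throughout, we use that $q$ is odd so $1/2$ and $1/4$ make sense in $GF(q)$, and that exactly half of the nonzero elements of $GF(q)$ are squares.

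For case (i), with top rows indexed by $x_1 \neq x_2$, the column-$y$ difference in the left block is $(x_1-x_2)y + (x_1^2-x_2^2)/4$, and in the right block is $(x_1-x_2)y + c(x_1^2-x_2^2)/4$. Both are affine in $y$ with nonzero slope $x_1 - x_2$, hence bijections $GF(q) \to GF(q)$: every value appears once in the left block and once in the right block, for a total of $2$. Case~(ii) is handled identically because the $-y^2$ terms cancel when two bottom rows are subtracted.

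The substantive case is (iii): one row from the top (index $x_1$), one from the bottom (index $x_2$). A direct expansion and completion of the square gives
\begin{equation}
a^1_{x_1,y} - a^3_{x_2,y} = \bigl(y + (x_1-x_2)/2\bigr)^2 + x_1 x_2/2,
\end{equation}
and, after clearing $1/c$ from $a^4$ and again completing the square,
\begin{equation}
a^2_{x_1,y} - a^4_{x_2,y} = c^{-1}\bigl(y + (cx_1 - x_2)/2\bigr)^2 + x_1 x_2/2.
\end{equation}
As $y$ ranges over $GF(q)$ (applying the bijections $y \mapsto y + (x_1-x_2)/2$ and $y \mapsto y + (cx_1-x_2)/2$), the left block attains the multiset $\{w^2 + x_1 x_2/2 : w \in GF(q)\}$ and the right block attains $\{c^{-1} w^2 + x_1 x_2/2 : w \in GF(q)\}$. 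Writing $t := v - x_1 x_2/2$ for the shifted target, the left block hits $v$ twice when $t$ is a nonzero square, once when $t=0$, and never when $t$ is a nonsquare; because $c$ (and hence $c^{-1}$) is a nonsquare, multiplying by $c$ swaps the square/nonsquare cosets, so the right block hits $v$ twice exactly when $t$ is a nonzero nonsquare, once when $t=0$, never when $t$ is a nonzero square. Summing the two contributions yields exactly $2$ in all three subcases, establishing the $GH(q,2)$ property.

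The main obstacle is case (iii): ensuring that the two blocks sitting on the same pair of rows together cover \emph{both} the square and the nonsquare cosets of $GF(q)^{\times}$. This is precisely why $c$ is chosen to be a nonsquare, and why the scaling $c$ (rather than $1$) appears in $A_2$ while $c^{-1}$ appears in $A_4$. Beyond clean completion of squares and careful bookkeeping of the multiplicities $\{0,1,2\}$ of solutions to $w^2 = t$ over $GF(q)$, we expect no further subtleties.
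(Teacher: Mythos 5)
Your proof is correct. Note that the paper itself does not prove Theorem~\ref{theo:Jungnickel}: it cites it directly from Jungnickel (1979), so there is no in-paper argument to compare against. Your direct verification is the natural and, as far as I can tell, standard route: cases (i) and (ii) are indeed immediate since the row-differences are affine in $y$ with nonzero slope (and, in case (ii), the $-y^2$ terms cancel because both bottom blocks carry the same quadratic term in $y$, with a common overall scalar $1$ for $A_3$ and $1/c$ for $A_4$, respectively). The substantive case (iii) is handled correctly: your completions of the square check out,
\[
a^1_{x_1,y}-a^3_{x_2,y}=\bigl(y+\tfrac{x_1-x_2}{2}\bigr)^2+\tfrac{x_1x_2}{2},\qquad
a^2_{x_1,y}-a^4_{x_2,y}=c^{-1}\bigl(y+\tfrac{cx_1-x_2}{2}\bigr)^2+\tfrac{x_1x_2}{2},
\]
and the multiplicity count $(2,1,0)$ versus $(0,1,2)$ for nonzero-square, zero, and nonsquare shifted targets $t$ sums to $2$ in all cases precisely because $c$ is a nonsquare, which is the heart of the construction. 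Two minor points worth making explicit if you were to write this out fully: first, in case (iii) you must also cover $x_1=x_2$ (these index distinct rows of $D$, one in each half), and your formulas handle this automatically since they never assume $x_1\neq x_2$; second, the group in question is the additive group of $GF(q)$, which is exactly $EA(q)$, so checking differences valued in $GF(q)$ is the right thing to do.
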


In the case that $q$ is prime, $EA(q) \simeq \mathbb{Z}_q$, so we obtain a generalized Hadamard matrix over $\mathbb{Z}_q$. For example the $GH(3,2)$ matrix $H$ given in Eq.~\eqref{eg:gen-hadamard} was obtained from Jungnickel's construction with $q=3$. The Kronecker product of generalized Hadamard matrices is also a generalized Hadamard matrix. Exploiting this observation and Jungnickel's construction, one immediately obtains generalized Hadamard matrices $GH(p, \lambda)$ over $\mathbb{Z}_p$ for each $n := p\lambda = 2^kp^m$ where $m \geq k \geq 1$ (see \cite[Corollary 2.5]{Jungnickel79}).

The Jungnickel-Lison\v ek construction for KS sets does not work directly from generalized Hadamard matrices, but from S-Hadamard matrices, which we now define. A matrix $H = (h_{i,j}) \in \mathbb{C}^{n \times n}$ is an \emph{S-Hadamard matrix of order $n$} if
\begin{enumerate}
    \item $HH^* = nI$ ,
    \item $|h_{i,j}| = 1$ for each $1 \leq i,j \leq n$, 
    \item $\sum_{j=1}^{n} h^2_{k,j}\overline{h^2_{\ell, j}} = 0$ for each $1 \leq k,\ell \leq n$ with $k \neq \ell$.
\end{enumerate}

The next proposition allows one to construct S-Hadamard matrices from generalized Hadamard matrices over $\mathbb{Z}_g$ (for $g \geq 3$). 

\begin{proposition} \cite[Proposition 2.4]{Lisonek2019}
    Let $g \geq 3$, and let $H = (h_{i,j})$ be a generalized Hadamard matrix $GH(g, \lambda)$ over $\mathbb{Z}_g$. Denote by $\zeta_g$ a primitive $g$th root of unity. Then the matrix $S = (\zeta_g^{h_{i,j}})$ is an S-Hadamard matrix.
\end{proposition}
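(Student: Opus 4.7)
The plan is to verify the three defining properties of an S-Hadamard matrix for $S=(\zeta_g^{h_{i,j}})$ directly from the definition of a generalized Hadamard matrix over $\mathbb{Z}_g$, reducing everything to the elementary identity $\sum_{a=0}^{m-1}\zeta_m^{a}=0$ for $m\ge 2$. Write $n=g\lambda$ for the order of $H$ and of $S$. Property~2 will be immediate: every entry of $S$ is a root of unity and so has modulus $1$.

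For Property~1, I would expand
\[
(SS^{*})_{k,\ell} \;=\; \sum_{j=1}^{n}\zeta_g^{h_{k,j}-h_{\ell,j}}.
\]
The diagonal ($k=\ell$) contributes $n$ since every summand is $1$. For $k\ne\ell$, the defining property of $H$ says that, as $j$ varies, the difference $h_{k,j}-h_{\ell,j}$ attains each element of $\mathbb{Z}_g$ exactly $\lambda$ times, so the sum collapses to $\lambda\sum_{a=0}^{g-1}\zeta_g^{a}=0$, giving $SS^{*}=nI$.

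The main step, and the one that genuinely uses the hypothesis $g\ge 3$, will be Property~3. The analogous manipulation gives, for $k\ne\ell$,
\[
\sum_{j=1}^{n}s_{k,j}^{2}\,\overline{s_{\ell,j}^{2}} \;=\; \sum_{j=1}^{n}\zeta_g^{2(h_{k,j}-h_{\ell,j})} \;=\; \lambda\sum_{a=0}^{g-1}\zeta_g^{2a},
\]
and I need to show that this inner sum vanishes. I would split into parity cases. If $g$ is odd then $\gcd(2,g)=1$, so $a\mapsto 2a$ permutes $\mathbb{Z}_g$ and the sum reduces to $\sum_{a}\zeta_g^{a}=0$. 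If $g$ is even then the hypothesis forces $g\ge 4$, and the identity $\zeta_g^{2a}=\zeta_{g/2}^{a}$ shows each $(g/2)$-th root of unity is hit exactly twice as $a$ ranges, yielding $2\sum_{b=0}^{g/2-1}\zeta_{g/2}^{b}=0$ because $g/2\ge 2$. The only genuine obstacle I anticipate is precisely this even-$g$ subcase: for $g=2$ the doubling would collapse every summand to $1$ and Property~3 would fail, which explains exactly why the hypothesis $g\ge 3$ appears in the statement. Assembling these three facts will complete the proof.
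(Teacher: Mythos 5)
Your proof is correct. The paper itself cites this statement from Lison\v{e}k (2019) without reproducing a proof, but your argument is the natural direct verification: Property~2 is immediate, Property~1 reduces to $\lambda\sum_{a=0}^{g-1}\zeta_g^a=0$ via the defining property of the generalized Hadamard matrix, and Property~3 reduces to $\lambda\sum_{a=0}^{g-1}\zeta_g^{2a}=0$, which you correctly handle by the parity split (bijectivity of doubling when $g$ is odd, and $\zeta_g^{2a}=\zeta_{g/2}^a$ with $g/2\ge 2$ when $g$ is even), and you correctly pinpoint the failure at $g=2$ as the reason for the hypothesis $g\ge 3$.
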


In the case that $N := g\lambda$ is even, we obtain a KS set with at most $\binom{N+1}{2}$ vectors, and $N+1$ orthogonal bases via the construction described in \cite[Theorem 3.1]{Lisonek2019}. We describe this construction now, following the notation of \cite{Lisonek2019}.

Let $H = (h_{i,j})_{i,j=1}^{n}$ be an S-Hadamard matrix of order $n$. By $h_i$ we denote the $i$th row of $H$. If $h_1$ is not the all-ones vector, we normalize $H$ by multiplying each of its entries $h_{i,j}$ by $h_{1,j}^{-1}$.
For vectors $u = (u_1, \dots, u_n),v = (v_1, \dots, v_n) \in \mathbb{C}^n$, we write $u \circ v$ to denote the vector $(u_1v_1, \dots, u_nv_n)$. 

From $H$, we construct the KS set $(\mathcal{V}, \mathcal{B})$ where $\mathcal{V}$ is the set of vectors and $\mathcal{B}$ is the set of orthogonal bases. We index each vector by a pair of indices $i,j$ each between $1$ and $N+1$ - so the elements of $\mathcal{V}$ are each written in the form $v^{i,j}$. It should be understood that $v^{i,j}$ and $v^{j,i}$ denote the same vector.

The vectors are:
\begin{itemize}
    \item $v^{1,s} := h_{s-1}$ for each $s = 2, \dots, n+1$. 
    \item $v^{2,s} := h_{s-1} \circ h_{s-1}$ for each $s=3, \dots, n+1$.
    \item $v^{r,s} := h_{r-1} \circ h_{s-1}$ for each pair $r=3,\dots, n+1$, $s=r+1,\dots n+1$.
\end{itemize}
One can check that 
\begin{equation}
    B_r = \{v^{r,i} : i=1,\dots, n, i \neq r\}
\end{equation}
is an orthogonal basis for each $r=1, \dots, n+1$.  Finally, setting $\mathcal{B} = \{B_1, \dots, B_{n+1}\}$, the previously described parity argument shows that pair $(\mathcal{V}, \mathcal{B})$ does indeed form a KS set (there are an odd number of orthogonal bases, and an even number of vectors in each orthogonal basis). Note that the vectors $v^{\{i,j\}}$ generated by this construction may not be pairwise linearly independent. 

We say that an orthogonal representation is \emph{faithful} if non-adjacent vertices are mapped to by non-orthogonal vectors. Whenever we may produce a faithful orthogonal representation of a graph $G$ via a KS set $\mathcal{K}$, we shall say that $\mathcal{K}$ \emph{induces} a faithful orthogonal representation of $G$.

Using the procedure that we have described with the matrix $D^{(3)}$, we obtain the KS set given in Table~\ref{tab:raysJ72}. This induces a faithful orthogonal representation of the Johnson graph $J(7,2)$. By construction the co-ordinates of each $v^{\{i,j\}}$ are of the form $\zeta_3^{m}$ for some $m \in \{0,1,2\}$. In order to simplify the presentation, we replace each $\zeta_3^{m}$ by its exponent $m$ in the vector so that for example $v^{\{1,4\}}$ is denoted by $(1, 0, 2, 2, 1, 0)$ but is really $(\zeta_3, 1, \zeta_3^2, \zeta_3^2, \zeta_3, 1)$. The vectors are written in this form in Table~\ref{tab:raysJ72}. 

We emphasize that the graph $J(7,2)$ is simply the graph obtained from the $J(7,2)$-configuration (see Fig.~\ref{fig:J72}) by viewing each point of the $J(7,2)$-configuration as a vertex, with a pair of vertices being adjacent whenever the corresponding points are colinear in the $J(7,2)$-configuration.

We also give the KS set obtained from the Jungnickel-Lison\v ek construction defining an orthogonal representation of $J(11,2)$ in Table~\ref{tab:raysJ112}. The orthogonal representation in this case is not faithful. For example, the vectors $v^{\{1,2\}}$ and $v^{\{3,4\}}$ are orthogonal but are not in a common maximum clique. The components of each vector are all of the form $\zeta_5^{m}$, where $\zeta_5$ is a primitive $5$th root of unity and $m \in \{0, 1, 2, 3, 4\}$. As before, we replace the power of root of unity with its exponent (i.e., $\zeta_5^m$ with $m$) in the presentation of the vectors for simplicity.

To summarize, whenever there is a generalized Hadamard matrix $GH(g,\lambda)$ over $\mathbb{Z}_g$ for some positive integer $\lambda$ satisfying $g\lambda = N-1$, we obtain a PQS for the colored odd-pointed star game played on the $J(N,2)$-configuration. Therefore, the Jungnickel-Lison\v ek construction ensures that there is a perfect quantum strategy for infinitely many $N$ (those of the form $2^kp^m+1$ with $m \geq k \geq 1$, $m,k$ integers). There are also generalized Hadamard matrices over $\mathbb{Z}_g$ so that $g\lambda = N-1$ for different values of $N$ than those we have mentioned; our intention here was to illustrate that there is an infinite class, for which our choice is enough. There are several other constructions of infinite families of generalized Hadamard matrices given in \cite[Table 5.10]{Colbourn:1995}. The reader should take care to note that these constructions are given over $EA(g)$ for general prime powers $g$, so one must restrict $g$ to be prime. Additionally, \cite[Theorems 5.11 and 5.12]{Colbourn:1995} provide recursive constructions for generalized Hadamard matrices. For smaller cases, generalized Hadamard matrices over $\mathbb{Z}_g$ are known to exist for each order $k := g\lambda =1, \dots, 100$, with the exception of $2, 4, 8, 32, 40, 42, 60, 64, 66, 70, 78, 84, 88$ \cite[Table 5.13]{Colbourn:1995} (note that, in some of these cases, it is still open whether such a generalized Hadamard matrix exists). Therefore, the smallest Johnson graph for which we may not produce a labeling via this method is $J(9,2)$. We refer the reader to the Sec.~5.2 of 
Ref.~\cite{Colbourn:1995} for a more complete list of the possible choices of $N$. 

Finally, let us remark that Lison\v ek's construction of a KS set does not \emph{need} a generalized Hadamard matrix, but only an S-Hadamard matrix. Note that here we refer to the ``Lison\v ek construction'' as opposed to the ``Jungnickel-Lison\v ek construction'' since we are not restricting the generalized Hadamard matrix to the ones built from Jungnickel's construction. 
It may be possible that for certain choices of $N$ suitable S-Hadamard matrices exist even when generalized Hadamard matrices do not.


\begin{table}[t]
    \centering
    {\tiny
    \begin{tabular}{ll}
        $v^{1, 2} = (0, 0, 0, 0, 0, 0, 0, 0, 0, 0)$ &
        $v^{1, 3} = (4, 0, 1, 2, 3, 3, 4, 0, 1, 2)$ \\
        $v^{1, 4} = (1, 3, 0, 2, 4, 2, 4, 1, 3, 0)$ & 
        $v^{1, 5} = (1, 4, 2, 0, 3, 2, 0, 3, 1, 4)$\\
        $v^{1, 6} = (4, 3, 2, 1, 0, 3, 2, 1, 0, 4)$ & 
        $v^{1, 7} = (0, 4, 1, 1, 4, 0, 2, 3, 3, 2)$\\
        $v^{1, 8} = (1, 1, 4, 0, 4, 3, 3, 2, 0, 2)$ & 
        $v^{1, 9} = (4, 0, 4, 1, 1, 2, 0, 2, 3, 3)$\\
        $v^{1, 10} = (4, 1, 1, 4, 0, 2, 3, 3, 2, 0)$ & 
        $v^{1, 11} = (1, 4, 0, 4, 1, 3, 2, 0, 2, 3)$\\
        $v^{2, 3} = (3, 0, 2, 4, 1, 1, 3, 0, 2, 4)$ & 
        $v^{2, 4} = (2, 1, 0, 4, 3, 4, 3, 2, 1, 0)$\\
        $v^{2, 5} = (2, 3, 4, 0, 1, 4, 0, 1, 2, 3)$ &
        $v^{2, 6} = (3, 1, 4, 2, 0, 1, 4, 2, 0, 3)$\\
        $v^{2, 7} = (0, 3, 2, 2, 3, 0, 4, 1, 1, 4)$ &
        $v^{2, 8} = (2, 2, 3, 0, 3, 1, 1, 4, 0, 4)$\\
        $v^{2, 9} = (3, 0, 3, 2, 2, 4, 0, 4, 1, 1)$ &
        $v^{2, 10} = (3, 2, 2, 3, 0, 4, 1, 1, 4, 0)$\\
        $v^{2, 11} = (2, 3, 0, 3, 2, 1, 4, 0, 4, 1)$ &
        $v^{3, 4} = (0, 3, 1, 4, 2, 0, 3, 1, 4, 2)$\\
        $v^{3, 5} = (0, 4, 3, 2, 1, 0, 4, 3, 2, 1)$ &
        $v^{3, 6} = (3, 3, 3, 3, 3, 1, 1, 1, 1, 1)$\\
        $v^{3, 7} = (4, 4, 2, 3, 2, 3, 1, 3, 4, 4)$ &
        $v^{3, 8} = (0, 1, 0, 2, 2, 1, 2, 2, 1, 4)$\\
        $v^{3, 9} = (3, 0, 0, 3, 4, 0, 4, 2, 4, 0)$ &
        $v^{3, 10} = (3, 1, 2, 1, 3, 0, 2, 3, 3, 2)$\\
        $v^{3, 11} = (0, 4, 1, 1, 4, 1, 1, 0, 3, 0)$ &
        $v^{4, 5} = (2, 2, 2, 2, 2, 4, 4, 4, 4, 4)$\\
        $v^{4, 6} = (0, 1, 2, 3, 4, 0, 1, 2, 3, 4)$ &
        $v^{4, 7} = (1, 2, 1, 3, 3, 2, 1, 4, 1, 2)$\\
        $v^{4, 8} = (2, 4, 4, 2, 3, 0, 2, 3, 3, 2)$ &
        $v^{4, 9} = (0, 3, 4, 3, 0, 4, 4, 3, 1, 3)$\\
        $v^{4, 10} = (0, 4, 1, 1, 4, 4, 2, 4, 0, 0)$ &
        $v^{4, 11} = (2, 2, 0, 1, 0, 0, 1, 1, 0, 3)$\\
        $v^{5, 6} = (0, 2, 4, 1, 3, 0, 2, 4, 1, 3)$ &
        $v^{5, 7} = (1, 3, 3, 1, 2, 2, 2, 1, 4, 1)$\\
        $v^{5, 8} = (2, 0, 1, 0, 2, 0, 3, 0, 1, 1)$ &
        $v^{5, 9} = (0, 4, 1, 1, 4, 4, 0, 0, 4, 2)$\\
        $v^{5, 10} = (0, 0, 3, 4, 3, 4, 3, 1, 3, 4)$ &
        $v^{5, 11} = (2, 3, 2, 4, 4, 0, 2, 3, 3, 2)$\\
        $v^{6, 7} = (4, 2, 3, 2, 4, 3, 4, 4, 3, 1)$ &
        $v^{6, 8} = (0, 4, 1, 1, 4, 1, 0, 3, 0, 1)$\\
        $v^{6, 9} = (3, 3, 1, 2, 1, 0, 2, 3, 3, 2)$ &
        $v^{6, 10} = (3, 4, 3, 0, 0, 0, 0, 4, 2, 4)$\\
        $v^{6, 11} = (0, 2, 2, 0, 1, 1, 4, 1, 2, 2)$ &
        $v^{7, 8} = (1, 0, 0, 1, 3, 3, 0, 0, 3, 4)$\\
        $v^{7, 9} = (4, 4, 0, 2, 0, 2, 2, 0, 1, 0)$ &
        $v^{7, 10} = (4, 0, 2, 0, 4, 2, 0, 1, 0, 2)$\\
        $v^{7, 11} = (1, 3, 1, 0, 0, 3, 4, 3, 0, 0)$ &
        $v^{8, 9} = (0, 1, 3, 1, 0, 0, 3, 4, 3, 0)$\\
        $v^{8, 10} = (0, 2, 0, 4, 4, 0, 1, 0, 2, 2)$ &
        $v^{8, 11} = (2, 0, 4, 4, 0, 1, 0, 2, 2, 0)$\\
        $v^{9, 10} = (3, 1, 0, 0, 1, 4, 3, 0, 0, 3)$ &
        $v^{9, 11} = (0, 4, 4, 0, 2, 0, 2, 2, 0, 1)$\\
        $v^{10, 11} = (0, 0, 1, 3, 1, 0, 0, 3, 4, 3)$ & \
    \end{tabular}
    }
    \caption{The vectors of the KS set produced by the Jungnickel-Lison\v ek construction for the case $q=5$ (and non-square element of $GF(5),\ n=2$ in the construction of the generalized Hadamard matrix). We have replaced each coordinate by its exponent (so for $\zeta_5^m$ we simply write the exponent $m$).
    These vectors form an orthogonal representation of $J(11,2)$. The orthogonal representation is not faithful.}
    \label{tab:raysJ112}
\end{table}


\begin{figure*}[htbp]
    \flushleft
    \vspace{-1cm}
    \hspace{0.6cm}
    \begin{minipage}[b]{0.35\textwidth}
        \centering
        \begin{subfigure}[b]{\textwidth}
            \centering
            \captionsetup{margin={3cm,0cm}}
            \includegraphics[scale=0.6]{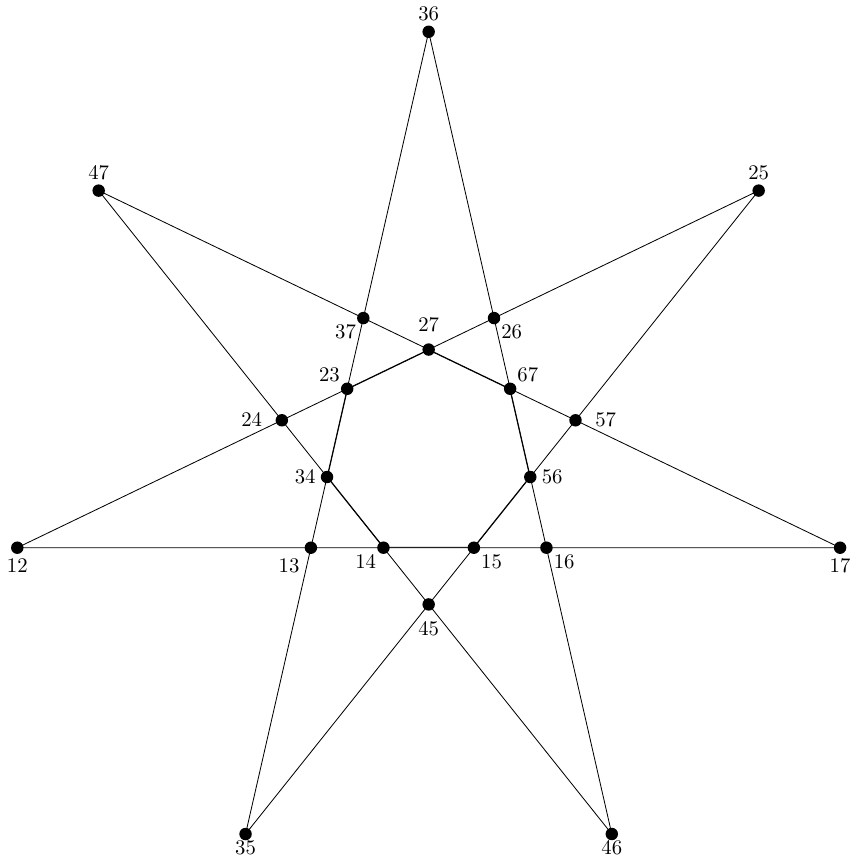}
            \caption{}
            \label{fig:J72}
        \end{subfigure}
    \end{minipage}

    \hspace{-0.6cm}

    \vspace{2.5cm}
    \begin{minipage}[t]{0.35\textwidth}
        \begin{subfigure}[b]{\textwidth}
            \centering
            \captionsetup{margin={4.2cm,0cm}}
            \includegraphics[scale=0.6]{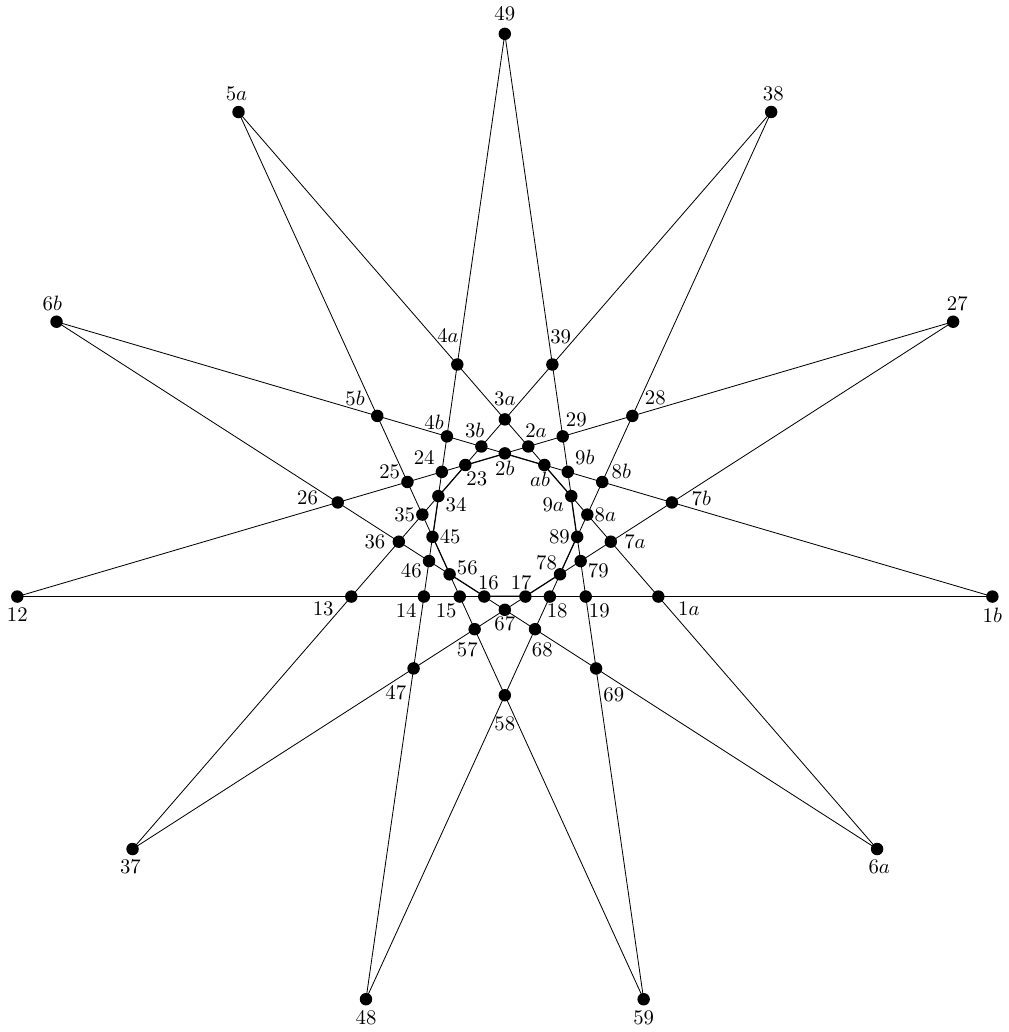}
            \caption{}
            \label{fig:fig3b.pdf}
        \end{subfigure}
    \end{minipage}

    \flushright
    \vspace{-17cm}
    \begin{minipage}[b]{0.45\textwidth}
        \centering
        \begin{subfigure}[b]{\textwidth}
            \centering
            \captionsetup{margin={1.4cm,0cm}}
            \includegraphics[scale=0.65]{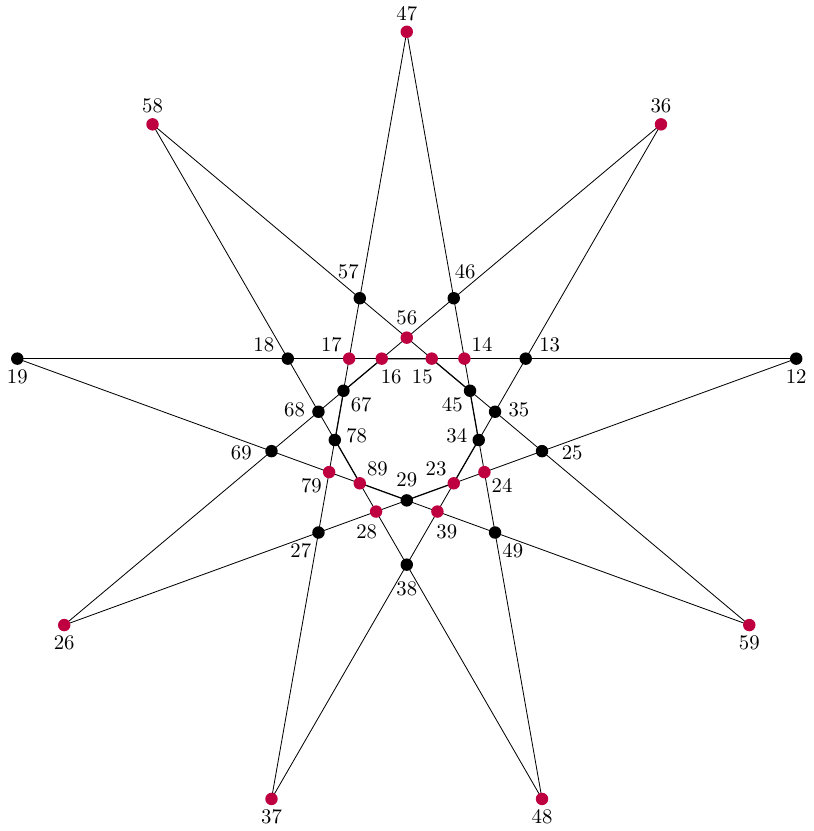}
            \caption{}
            \label{fig:J112}
        \end{subfigure}
        \vfill
    \end{minipage}

    \vspace{6.7cm}
    \caption{The $J(N,2)$-configurations for $N=7,11$ are illustrated in (a) and (b), respectively. Two vertices of $J(N ,2)$ are adjacent if and only if the corresponding points appear on a common line. For simplicity, vertex $\{i,j\}$ is labeled by $ij$. The $J(9,2)$-configuration is shown in (c). The $J(9,2)$-configuration is the simplest case where the Jungnickel-Lison\v ek construction cannot be used to produce a KS set. However, the orthogonal representation in Table~\ref{tab:raysJ92} does. This orthogonal representation is not faithful. It is obtained 
    from the 18-vector KS set in $d=4$ \cite{Cabello:1996PLA} by replacing each vector $a$ with the two vectors $(a,0)$ (purple dots) and $(0,a)$ (black dots). }
    \label{fig:JohsonGraphs}
\end{figure*}

\begin{figure}
    \centering
    \begin{subfigure}[b]{0.22\textwidth}
        \centering
        \includegraphics[scale=1.2]{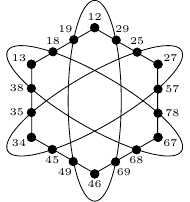}
        \caption{}
        \label{fig:CEG-18_A}
    \end{subfigure}
    \hfill
    \begin{subfigure}[b]{0.22\textwidth}
        \centering
        \includegraphics[scale=1.2]{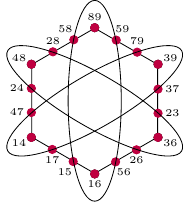}
        \caption{}
        \label{fig:CEG-18_B}
    \end{subfigure}
    \caption{The two copies of the 18-vector KS set \cite{Cabello:1996PLA} that appear in Fig.~\ref{fig:JohsonGraphs} (c). Two vertices are adjacent if and only if the corresponding points appear on a common line (which here can be a segment or an ellipse).}
    \label{fig:CEG-18_pair}
\end{figure}


\subsection{A novel construction for dimensions not covered by the Jungnickel-Lison\v ek construction}


The $N=9$ case is the smallest example of where there is no suitable generalized Hadamard matrix to produce a KS set for $J(N,2)$. Here, we show that, in fact, there is a KS set for $J(9,2)$. Additionally, we provide a novel construction that allows one to produce a KS set for $J(N^k,2)$ for each non-negative integer $k$ whenever one has a KS set for $J(N,2)$ with $N$ a prime power. This allows us to extend the Jungnickel-Lison\v ek construction in infinitely many dimensions. As an example, we are able to produce a KS set for $J(121,2)$. The direct constructions illustrated in \cite[Table 5.10]{Colbourn:1995} do not produce a generalized Hadamard matrix in dimension $120$ for the Jungnickel-Lison\v ek construction. In fact, the recursive construction we present produces a KS set for $J(N',2)$ for each $N' \equiv N \mod N(N-1)$ whenever $N'$ is sufficiently large.

Given a graph $G = (V,E)$, the \emph{line graph} of $G$, denoted $L(G) = (V',E')$ is the graph whose vertices are the edges of $G$ (i.e $V' = E$) and for which two vertices $w,x \in V'$ are adjacent if and only if $w$ and $x$ are incident to a common vertex of $G.$ The Johnson graph $J(N,2)$ is the line graph of the complete graph $K_N.$ 

A subgraph $H$ of $G$ is a \emph{spanning} graph of $G$ if $V(H) = V(G)$. A $k$-regular (each vertex has degree $k$) spanning graph of $G$ is called a \emph{$k$-factor} of $G$, and a \emph{$k$-factorization} of $G$ is a set of $k$-factors of $G$, $\{H_1, \dots, H_{\ell}\}$ whose edge sets form a partition of the edge set of $G$.

\begin{proposition} \label{prop:factor}
    Let $G$ be a graph with a $k$-factorization $H_1, \dots, H_{\ell}$ such that $L(H_i)$ has an orthogonal representation in $\mathbb{C}^k$ for each $i=1,\dots,\ell$. Then, there is an orthogonal representation for $L(G)$ in $\mathbb{C}^{k\ell}$.
\end{proposition}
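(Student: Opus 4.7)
The plan is to build the required representation of $L(G)$ by block-stacking the representations of the $L(H_i)$'s, each in its own orthogonal copy of $\mathbb{C}^k$ inside $\mathbb{C}^{k\ell}$. The starting observation is that because $\{H_1,\dots,H_\ell\}$ is a $k$-factorization of $G$, the edge sets $E(H_1),\dots,E(H_\ell)$ partition $E(G)$. Translating to line graphs, $V(L(G)) = E(G)$ decomposes as the disjoint union of the $V(L(H_i)) = E(H_i)$. So each vertex of $L(G)$ belongs to a unique $L(H_i)$, and I can speak unambiguously of ``the block'' that vertex lives in.

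For each $i$, let $\phi_i : V(L(H_i)) \to \mathbb{C}^k$ be the given orthogonal representation. I would identify $\mathbb{C}^{k\ell}$ with $\bigoplus_{i=1}^{\ell} \mathbb{C}^k$ and define $\Phi : V(L(G)) \to \mathbb{C}^{k\ell}$ by
\begin{equation}
\Phi(e) \;=\; 0 \oplus \cdots \oplus 0 \oplus \phi_i(e) \oplus 0 \oplus \cdots \oplus 0
\end{equation}
for $e \in V(L(H_i))$, with $\phi_i(e)$ placed in the $i$-th block. Since each $\phi_i(e)$ is nonzero by hypothesis, each $\Phi(e)$ is nonzero, so this is a candidate orthogonal labeling.

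It remains to check orthogonality whenever two vertices $e, e'$ of $L(G)$ are adjacent, i.e.\ share an endpoint $u$ in $G$. There are two cases. If $e, e' \in E(H_i)$ for the same $i$, then because $H_i$ is a spanning subgraph, $u \in V(H_i)$, so $e$ and $e'$ are both incident to $u$ in $H_i$ and hence adjacent in $L(H_i)$; the hypothesis then gives $\phi_i(e) \perp \phi_i(e')$, and since $\Phi(e), \Phi(e')$ both sit in the $i$-th block this yields $\Phi(e) \perp \Phi(e')$. If instead $e \in E(H_i)$ and $e' \in E(H_j)$ with $i \neq j$, the supports of $\Phi(e)$ and $\Phi(e')$ lie in complementary coordinate blocks, so their inner product is automatically zero. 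This covers every adjacent pair in $L(G)$ and completes the verification.

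There is no real obstacle here; the argument is a direct-sum construction and the only content is the clean bookkeeping above. The key conceptual point worth stressing in the write-up is that the ``spanning'' condition on the factors $H_i$ is exactly what guarantees the endpoint $u$ belongs to $V(H_i)$, so that intra-block adjacency transfers from $L(G)$ to $L(H_i)$ without loss; inter-block pairs take care of themselves by orthogonality of the $\ell$ coordinate subspaces.
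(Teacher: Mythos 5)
Your proof is correct and takes exactly the same block-direct-sum approach as the paper's, merely spelling out the orthogonality check case by case where the paper labels it ``straightforward.'' One small remark on attribution: the shared endpoint $u$ lies in $V(H_i)$ simply because $u$ is an endpoint of an edge $e \in E(H_i)$, not because $H_i$ is spanning --- the spanning hypothesis is actually not needed for this proposition, only the edge-partition property of the factorization.
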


\begin{proof}
    Let $|E(G)| = n'$, so $n := n'/k = |E(H_i)| $ for each $i=1,\dots,\ell$.  Further, let
    \begin{equation}
        \{v_j^{i} : j =1, \dots, n\}
    \end{equation}
    be the $k$-dimensional orthogonal representation of $L(H_i)$ for each $i=1,\dots,\ell$. Each of these vectors is associated to a vertex of $L(H_i)$, and thus to an edge of $H_i$, and thus an edge of $G$. Now define vectors 
    \begin{equation}
        \overline{v_j^{i}} := (\underbrace{0, \dots, 0}_{i-1}, v_j^{i}, \underbrace{0, \dots, 0}_{\ell - i}),
    \end{equation}
    where $0$ denotes the $k$-dimensional zero vector and $v_j^{i}$ is the $i$th block of size $k$. As each of these vectors are associated to an edge of $G$, they are associated to vertices of $L(G)$. It is straightforward to check that a pair of such vectors is orthogonal whenever the corresponding vertices of $L(G)$ are adjacent. Additionally, by construction, the dimension of each each of these vectors is $k\ell$. Therefore, we see that the vectors $\overline{v_j^{i}}$ do indeed form a $k\ell$-dimensional orthogonal representation of $G$ as required.
\end{proof}

Consider the $4$-factorization of $K_9$ into two isomorphic copies of the \emph{Paley graph} of order $9$ (henceforth $P(9)$) as shown in Fig.~\ref{fig:K9-factorization}. Paley graphs are important objects in graph theory - among other important properties, they are self-complementary and strongly regular. 
The Paley graph of order $9$ encodes the commutativity relationships of the Peres-Mermin ``magic'' square \cite{Peres:1990PLA,Mermin:1990PRLb}, and its line graph encodes all the $54$ ``essential'' orthogonalities of the $18$ vector KS set of \cite{Cabello:1996PLA} (the 18 vector KS set actually has $63$ orthogonalities, but one may ignore $9$ of them and still find that no assignment $f$ may be produced). Indeed, the $18$ vector KS set provides a $4$-dimensional orthogonal representation of $P(9)$.
Applying Proposition \ref{prop:factor}, we immediately obtain the $8$-dimensional orthogonal representation given in Table \ref{tab:raysJ92} which is a KS set for $J(9,2)$. Fig.~\ref{fig:CEG-18_pair} illustrates explicitly the two copies [subfigures~(a) and (b)] of the 18-vector KS set \cite{Cabello:1996PLA} that appear in the $J(9,2)$-configuration. Here, the 18-vector KS set appears in the same format as in \cite{Cabello:2008PRL} and \cite{Budroni:2022RMP}, where the $9$ orthogonal bases are given by the $6$ straight lines and $3$ ellipses. The two copies are further emphasized in our depiction of the $J(9,2)$-configuration of Fig.~\ref{fig:JohsonGraphs}~(c) by displaying the vertices corresponding to each copy in the same colors (black and purple) as in Fig.~\ref{fig:CEG-18_pair}. 


\begin{figure*}
    \centering
    \begin{subfigure}[b]{0.39\textwidth}
        \centering
        \includegraphics[scale=0.4]{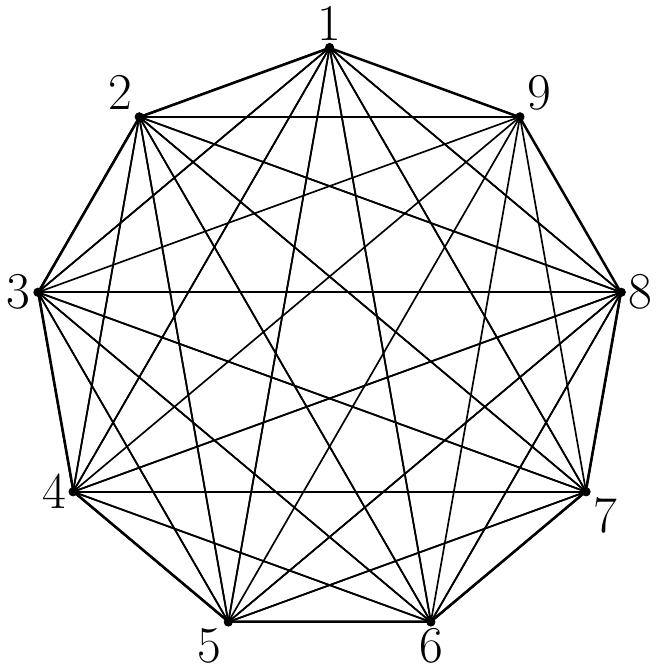}
        \caption{}
        \label{fig:K9}
    \end{subfigure}
    \hfill
    \begin{subfigure}[b]{0.59\textwidth}
        \centering
        \includegraphics[scale=0.8]{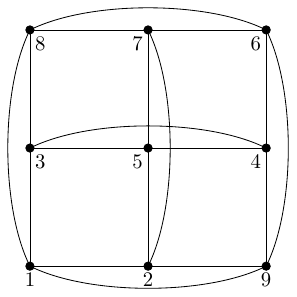}
        \includegraphics[scale=0.8]{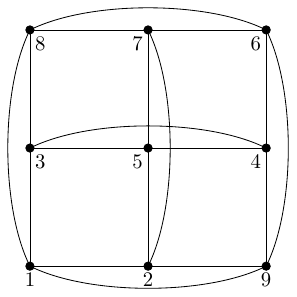}
        \caption{}
        \label{fig:rook_graph1}
    \end{subfigure}
    \caption{A $4$-factorization of (a) the complete graph $K_9$ into (b) two isomorphic copies of the Paley graph of order $9$, $P(9)$. Each edge of $K_9$ appears in exactly one copy of $P(9)$.}
    \label{fig:K9-factorization}
\end{figure*}


We now introduce a general construction for building KS sets for Johnson configurations from smaller ones. This construction is based on \emph{balanced incomplete block designs} (BIBDs). 

\begin{definition}
    Given positive integers $v,b,r,k,\lambda$, a \emph{balanced incomplete block design} is a pair $(S,T)$ where:
    \begin{enumerate}
        \item $S$ is a set of size $v$,
        \item $T$ is a collection of $b$ subsets of size $k$ called \emph{blocks},
        \item each element of $S$ is in exactly $r$ blocks,
        \item each pair of elements of $S$ is in exactly $\lambda$ blocks.
    \end{enumerate}
    If there is a partition $(t_1, \dots, t_m)$ of the blocks of $T$ so that each set of blocks $t_i$ is a partition of $S$, then we say that the BIBD is \emph{resolvable}, and write RBIBD for short. Finally, we call the partition $(t_1, \dots, t_m)$ a \emph{resolution}.
\end{definition}

Note that the parameters $v,k,\lambda$ determine the others, so in case we want to specify the parameters, we write $(v,k,\lambda)$-RBIBD. 

The key point is that an $(N,k,1)$-RBIBD yields a $(k-1)$-factorization of $K_N$, via the following recipe.
\begin{itemize}
    \item Let $S$ be the set of vertices of $K_N$.
    \item Consider some $t_i$ in the resolution. It is a partition of the vertices into sets of size $k$. Associate to each of these the complete graph $K_k$, and associate to $t_i$ the (vertex-disjoint) union of all of these complete graphs. Then $t_i$ corresponds to a $(k-1)$-regular spanning subgraph of $K_N$. Call this graph $H_i$.
    \item Since $\lambda = 1$, the graphs $H_1, \dots, H_m$ are edge-disjoint and the union of their edge sets is the edge set of $K_N$. Therefore $H_1, \dots, H_m$ is a $(k-1)$-factorization of $K_N$.
\end{itemize}

Notice further that since each $H_i$ is a disjoint union consisting of isomorphic copies of the graph $K_k$, a $(k-1)$-dimensional orthogonal representation of $L(K_k)$ provides a $(k-1)$-dimensional orthgonal representation of $L(H_i)$ -- one can simply use the same vectors for each copy of $K_k$. By applying Proposition \ref{prop:factor}, we obtain the following result.

\begin{theorem}
    If there exists a $(N,k,1)$-RBIBD for some odd $k$, and a $(k-1)$-dimensional orthogonal representation of $J(k,2)$, then there is an $(N-1)$-dimensional orthogonal representation of $J(N,2)$.
\end{theorem}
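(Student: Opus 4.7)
The plan is to deduce the statement directly from Proposition~\ref{prop:factor}, using the RBIBD recipe already described in the paragraph preceding the theorem to produce the required factorization of $K_N$. Given an $(N,k,1)$-RBIBD with resolution $(t_1,\dots,t_m)$, each parallel class $t_i$ partitions the vertex set of $K_N$ into $N/k$ blocks of size $k$. I take, for each block, the complete graph $K_k$ on that block, and let $H_i$ denote the vertex-disjoint union of these copies of $K_k$. Because $\lambda = 1$, every edge of $K_N$ lies in exactly one block and hence in exactly one $H_i$, so $\{H_1,\dots,H_m\}$ is a $(k-1)$-factorization of $K_N$.

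Second, I would lift the given $(k-1)$-dimensional OR of $J(k,2) = L(K_k)$ to a $(k-1)$-dimensional OR of each $L(H_i)$. Since $H_i$ is a disjoint union of copies of $K_k$, its line graph is the disjoint union of the corresponding copies of $L(K_k)$, and I can simply reuse the same vectors on each component. No new orthogonalities need to be respected, because edges lying in different $K_k$-components of $H_i$ share no vertex, so the corresponding vertices of $L(H_i)$ are non-adjacent and the OR definition imposes no constraint between them.

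Third, I would invoke Proposition~\ref{prop:factor} with $G = K_N$, the factorization $H_1,\dots,H_m$, and the orthogonal representations just constructed, obtaining an orthogonal representation of $L(K_N) = J(N,2)$ in dimension $(k-1)\cdot m$. A short parameter count closes the argument: in any $(N,k,1)$-RBIBD the replication number satisfies $r(k-1) = \lambda(N-1) = N-1$, and the number $m$ of parallel classes equals $r$, so $(k-1)\cdot m = N-1$, which is exactly the target dimension.

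The main obstacle is essentially vacuous for this theorem, since the substantive work is already contained in Proposition~\ref{prop:factor} and in the RBIBD-to-factorization reduction. The one subtlety worth verifying is that reusing the same vectors of the OR of $L(K_k)$ on the several components of $H_i$ is legitimate; as noted above, this is immediate from the definition of OR used here, which imposes no non-orthogonality requirement on non-adjacent vertices. The hypothesis that $k$ is odd plays no role in this argument directly, but is needed only so that $J(k,2)$ can actually host a KS set (so that the resulting OR of $J(N,2)$ inherits the parity-proof structure required in the applications of Sec.~\ref{sec:jungnickel-lisonek}).
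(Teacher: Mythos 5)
Your proposal is correct and follows essentially the same route the paper takes: pass from the $(N,k,1)$-RBIBD to the $(k-1)$-factorization $\{H_i\}$ of $K_N$, reuse the OR of $L(K_k)$ on each disjoint $K_k$-component of each $H_i$, and invoke Proposition~\ref{prop:factor}; your explicit check that $(k-1)m = r(k-1) = N-1$ neatly confirms the target dimension, a step the paper leaves implicit.
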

Such RBIBDs exist for each choice of $k$ that is a prime power, since the \emph{affine geometry} $AG(2, k)$ is a $(k^2,k,1)$-RBIBD (see for example \cite[Theorem 2.16]{Colbourn:1995}). As a consequence, we immediately obtain a $120$-dimensional orthogonal representation of $J(121)$ which notably does not seem to be attainable via the generalized Hadamard constructions that we have seen. 

In fact, one can generalize this to all powers of the odd prime power $k$: there is a $(k^m, k, 1)$-RBIBD for each positive integer $m$ \cite[Eq.5.6.a]{Beth:1999}.

Combining this with the previous construction, we obtain the following:

\begin{theorem}
    Let $n$ be an odd prime power such that there is a generalized Hadamard matrix over $\mathbb{Z}_p$ for some $p | (n-1)$. Then, there is a $(n^m~-~1)$-dimensional orthogonal representation of the Johnson graph $J(n^m)$ for each positive integer $m$, and consequently a KS set associated to each of these Johnson graphs.
\end{theorem}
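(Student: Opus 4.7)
The plan is to combine the Jungnickel--Lisoněk construction of Sec.~\ref{sec:jungnickel-lisonek} with the RBIBD-based recursion of the theorem proved just above. The base case $m = 1$ supplies an orthogonal representation of $J(n,2)$ in $\mathbb{C}^{n-1}$, and the recursion lifts it to an orthogonal representation of $J(n^m,2)$ in $\mathbb{C}^{n^m-1}$ for every $m \ge 2$.

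For the base case, I would invoke the hypothesis directly: since $p \mid (n-1)$ and there is a generalized Hadamard matrix over $\mathbb{Z}_p$ of order $n-1$, I apply Lisoněk's conversion to obtain an S-Hadamard matrix of order $n-1$ and then feed it into the construction of \cite[Theorem 3.1]{Lisonek2019}. Because $n$ is an odd prime power, $n-1$ is even, so the parity count goes through and yields a KS set with $n$ orthogonal bases in dimension $n-1$, i.e.\ an orthogonal representation of $J(n,2)$ in $\mathbb{C}^{n-1}$.

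For $m \ge 2$, I would appeal to the existence of a $(n^m, n, 1)$-RBIBD for every positive integer $m$, which holds since $n$ is a prime power (the $m=2$ case is the affine plane $AG(2,n)$, and the general case is cited as \cite[Eq.~5.6.a]{Beth:1999}). Such an RBIBD induces an $(n-1)$-factorization of $K_{n^m}$ with $\ell = (n^m-1)/(n-1)$ parallel classes, each class being a vertex-disjoint union of copies of $K_n$. Reusing the base-case representation of $J(n,2) = L(K_n)$ on every copy of $K_n$ yields a $(n-1)$-dimensional orthogonal representation of $L(H_i)$ for each factor $H_i$. Proposition~\ref{prop:factor} then stacks these into an orthogonal representation of $L(K_{n^m}) = J(n^m,2)$ in dimension $(n-1)\ell = n^m - 1$. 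The KS-set conclusion follows automatically from the parity argument of Sec.~\ref{sec.N=7}: the induced $n^m$ orthogonal bases (one per line of the $J(n^m,2)$-configuration) form an odd collection, while every vector lies in exactly two of them, so no $\{0,1\}$-assignment satisfying (I) and (II) of Definition~\ref{def:ks} can exist.

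Since the theorem is essentially a splicing of two earlier constructions, I do not expect a genuine obstacle. The only point deserving care is the parameter count: verifying that the RBIBD yields exactly $(n^m-1)/(n-1)$ parallel classes so that Proposition~\ref{prop:factor} outputs dimension $n^m-1$ rather than something larger, and that the block-diagonal structure $\overline{v_j^{i}}$ of that proposition keeps the orthogonalities inherited from different factors from conflicting. Both are immediate from the standard parameter relations for $(v,k,1)$-designs and from the explicit form of the stacked vectors.
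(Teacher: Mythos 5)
Your proposal is correct and follows essentially the same path as the paper, which leaves the proof implicit after stating ``Combining this with the previous construction, we obtain the following'': the Jungnickel--Lison\v ek construction supplies the base-case orthogonal representation of $J(n,2)$ in $\mathbb{C}^{n-1}$, and the $(n^m,n,1)$-RBIBD together with the $(k-1)$-factorization theorem lifts it to $J(n^m,2)$ in $\mathbb{C}^{n^m-1}$. Your parameter check that the RBIBD has $r=(n^m-1)/(n-1)$ parallel classes, and the observation that $n^m$ odd gives the parity proof, are exactly the two points that make the combination go through.
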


In fact, for any choice of $k$, and $N \equiv k \mod k(k-1)$ sufficiently large, there exists an $(N,k,1)$-RBIBD \cite{Chaudhuri:1973}. Therefore, we obtain orthogonal representations for $J(N,2)$ in dimension $N-1$ for infinitely many dimensions for each choice of $k$ for which we have an orthogonal representation in dimension $k-1$. To emphasize, using the $k=7$ representation we have given in Table \ref{tab:raysJ72}, one immediately obtains KS sets for $J(N,2)$ for each $N \equiv 7 \mod 42$ with $N$ sufficiently large.

The best known lower bound for ``sufficiently large'' is currently $\exp(\exp(k^{12k^2}))$ \cite{Yanxun:2000}. This bound can likely be improved -- indeed in the $k=7$ case the largest possible $N$ for which there may be no $(N,k,1)$-RBIBD is $294427$ \cite[Table 7.35]{Colbourn:1995}. We also remark (although no suitable orthogonal representation for $J(k,2)$ exists in these cases) that in the case of $k=5$ there is a $(N, 5, 1)$-RBIBD for each $N > 645$ satisfying the congruence condition and in the case of $k=3,4$ the congruence condition is not just necessary but also sufficient. Such explicit bounds do not seem to have been obtained for other $k$ (see \cite[Remark 7.36]{Colbourn:1995}).


\begin{table}[t]
    \centering
    {\tiny
    \begin{tabular}{ll}
        $v^{1,2} = (0, 0, 0, 0, 0, 0, 0, 1)$ &
        $v^{1,3} = (0, 0, 0, 0, 0, 1, -1, 0)$ \\
        $v^{1,4} = (0, 0, 0, 1, 0, 0, 0, 0)$ &
        $v^{1,5} = (0, 1, -1, 0, 0, 0, 0, 0)$ \\
        $v^{1,6} = (0, 1, 1, 0, 0, 0, 0, 0)$ &
        $v^{1,7} = (1, 0, 0, 0, 0, 0, 0, 0)$ \\
        $v^{1,8} = (0, 0, 0, 0, 0, 1, 1, 0)$ &
        $v^{1,9} = (0, 0, 0, 0, 1, 0, 0, 0)$ \\
        $v^{2,3} = (1, 1, 1, 1, 0, 0, 0, 0)$ &
        $v^{2,4} = (1, 0, -1, 0, 0, 0, 0, 0)$ \\
        $v^{2,5} = (0, 0, 0, 0, 1, 0, 1, 0)$ &
        $v^{2,6} = (1, -1, 1, -1, 0, 0, 0, 0)$ \\
        $v^{2,7} = (0, 0, 0, 0, 1, 0, -1, 0)$ &
        $v^{2,8} = (0, 1, 0, -1, 0, 0, 0, 0)$\\
        $v^{2,9} = 0, 0, 0, 0, (0, 1, 0, 0)$ &
        $v^{3,4} = (0, 0, 0, 0, 1, 1, 1, -1)$ \\
        $v^{3,5} = (0, 0, 0, 0, -1, 1, 1, 1)$ &
        $v^{3,6} = (1, 1, -1, -1, 0, 0, 0, 0)$ \\
        $v^{3,7} = (0, 0, 1, -1, 0, 0, 0, 0)$ &
        $v^{3,8} = (0, 0, 0, 0, 1, 0, 0, 1)$ \\
        $v^{3,9} = (1, -1, 0, 0, 0, 0, 0, 0)$ &
        $v^{4,5} = (0, 0, 0, 0, 1, 1, -1, 1)$ \\
        $v^{4,6} = (0, 0, 0, 0, 1, -1, 0, 0)$ &
        $v^{4,7} = (0, 1, 0, 0, 0, 0, 0, 0)$ \\
        $v^{4,8} = (1, 0, 1, 0, 0, 0, 0, 0)$ &
        $v^{4,9} = (0, 0, 0, 0, 0, 0, 1, 1)$ \\
        $v^{5,6} = (1, 0, 0, 1, 0, 0, 0, 0)$ &
        $v^{5,7} = (0, 0, 0, 0, 0, 1, 0, -1)$ \\
        $v^{5,8} = (-1, 1, 1, 1, 0, 0, 0, 0)$ &
        $v^{5,9} = (1, 1, 1, -1, 0, 0, 0, 0)$ \\
        $v^{6,7} = (0, 0, 0, 0, 1, 1, 1, 1)$ &
        $v^{6,8} = (0, 0, 0, 0, 1, 1, -1, -1)$\\
        $v^{6,9} = (0, 0, 0, 0, 0, 0, 1, -1)$ &
        $v^{7,8} = (0, 0, 0, 0, 1, -1, 1, -1)$ \\
        $v^{7,9} = (0, 0, 1, 1, 0, 0, 0, 0)$ &
        $v^{8,9} = (1, 1, -1, 1, 0, 0, 0, 0)$ \\
    \end{tabular}
    }
    \caption{The vectors of the KS set produced from the 18 vector KS set of \cite{Cabello:1996PLA} by replacing each vector $a$ with the two vectors $(a,0)$ and $(0,a)$. 
    As shown in Fig.~\ref{fig:JohsonGraphs}~(c), these vectors form an orthogonal representation of $J(9,2)$. The orthogonal representation is not faithful.} 
    \label{tab:raysJ92}
\end{table}


Aside from the $J(9,2)$ case, the smallest case for which the there is no suitable generalized Hadamard matrix is $J(33,2)$ (in dimension $32$). 
Future research is needed to check if a suitable $S$-Hadamard matrix exists in this case, or if one may leverage a $k$-factorization of $K_{33}$ as we have done in this section.


\section{Perfect quantum strategies with minimal input cardinality} \label{sec.Optimality}


Let us now go back to the bipartite (colored odd-pointed star) games. We will finally justify our choice of input sets for Alice and Bob. In this section, $b$ will be used to denote an orthogonal basis of a KS set.

Let us assume that we have an orthogonal representation of $J(N,2)$ in dimension $N-1$. Denote the vectors by $\mathcal{V} = \{v^{i,j} : 1 \leq i < j \leq N\}$ with each $v^{i,j}$ associated to vertex $\{i,j\}$ of $J(N,2)$. Then, it is straightforward to see that, for each $i=1,\dots,N$, the set $b_i = \{v^{i,j} : j =1, \dots, N, j \neq i\}$ forms an orthogonal basis. Let $\mathcal{B} := \{b_1, \dots, b_N\}$. From Sec.~\ref{sec.InfiniteClass}, we know that $(\mathcal{V}, \mathcal{B})$ forms a KS set. The natural generalization of the PQS outlined in Sec,~\ref{sec.N=7} yields a perfect quantum strategy for the colored odd-pointed star game played on the $J(N,2)$-configuration. 

Consider sets $S_A, S_B~\subseteq~\mathcal{B}$. A \emph{classical assignment} consists of a pair  
\begin{align*}
    &f_A:\mathcal{V} \times S_A \to\{0, 1\} \\
    &f_B: \mathcal{V} \times S_B \to\{0,1\}
\end{align*}
satisfying
\begin{enumerate}
    \item for $b_i \in S_A, b_k \in S_B$, and vectors $v^{i,j} \in b_i, v^{k,\ell} \in b_k$, if $|\{i,j\} \cap \{k,\ell\}| = 1$, then we cannot have $f_A(v^{i,j}, b_i) = f_B(v^{k,\ell}, b_k) = 1,$
    \item
    \begin{enumerate}[(a)]
        \item for each orthogonal basis $b_i \in S_A$,\ $\sum_{j \neq i} f_A(v^{i,j}, b_i) = 1 $,
        \item for each orthogonal basis $b_k~\in~S_B,\ \sum_{\ell \neq k} f_B(v^{k, \ell}, b_k) = 1$.
\end{enumerate} 
\end{enumerate}
Note that we tacitly assume that $f_A(v,b) = 0$ for each $b \in S_A$ if $v \notin b$, and $f_B(v,b) = 0$ for each $b \in S_B$ if $v \notin b$. 

Given a classical assignment $f_A, f_B$, we say that an orthogonal basis $b_i$ is \emph{labeled} for $S_A$ if we have specified the values $f_A(v^{i,j},b_i)$ for each of the vectors $v^{i,j} \in b_i$ ($i=1, \dots, N, i \neq j$). In order to do this, it is sufficient to specifiy the vector $v^* \in b_i$ with $f_A(v^*,b_i) = 1$. Similarly, we say that $b_i$ is \emph{unlabeled} for $S_A$ if we have not yet designated the values $f_A(v^{i,j},b_i)$. Of course, the same holds replacing $A$ with $B$. 

\begin{definition}
\label{def:bks}
     The pair $(S_A, S_B)$ is a \emph{bipartite KS set} (B-KS) if no classical assignment exists for $(S_A, S_B).$
\end{definition} 

 It is clear that $(\mathcal{B}, \mathcal{B})$ is always a B-KS since $(\mathcal{V}, \mathcal{B})$ is a KS set,. Additionally, if $S_B \subseteq S_B'$ for some $S_B' \subseteq \mathcal{B}$, then $(S_A, S_B')$ is also a B-KS.
A B-KS set $(S_A, S_B)$ is an \emph{optimal bipartite KS set} if $|S_A||S_B|$ is minimum over all choices of $S_A, S_B \subseteq \mathcal{B}$.

The relationship between B-KS sets and the colored odd-pointed star game is as follows:
\begin{enumerate}
    \item B-KS sets correspond exactly to the input sets for Alice and Bob in the colored odd-pointed star game for which there is no perfect classical strategy,
    \item optimal B-KS sets correspond exactly to such input sets of minimum input cardinality.
\end{enumerate}

Throughout the proofs, we take advantage of the symmetries of the graph $J(N,2)$. Namely, any permutation $\sigma \in \mathfrak{S}_N$ of $\{1,2,\dots,N\}$ induces an automorphism of $J(N,2)$ by mapping each vertex $ij$ to $\sigma(i)\sigma(j)$. It therefore also maps the maximum clique $C_i := \{ij: j=1,\dots,N, j \neq i\}$ to the maximum clique $C_{\sigma(i)} := \{\sigma(i)j: j=1,\dots,N, j \neq \sigma(i)\}$. This induces an automorphism of the KS set mapping each $v^{i,j}$ to $v^{\sigma(i), \sigma(j)}$, and consequently each $b_i$ to $b_{\sigma(i)}$.

\begin{theorem} \label{theo:optimal-labeling}
    The optimal B-KS sets are exactly the pairs $(S_A, S_B)$ satisfying:
    \begin{align}
        S_A &= \{b_m : m \notin \{i,j\}\} \label{eq:S_A} \\
        S_B &= \{b_n : n \notin \{k,\ell\}\} \label{eq:S_B}
    \end{align}
    for any choice of distinct $i,j,k,\ell \in \{1,2,\dots,N\}.$ 
\end{theorem}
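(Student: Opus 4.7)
The plan is to prove two directions: (i) every pair $(S_A,S_B)$ of the form \eqref{eq:S_A}--\eqref{eq:S_B} is a B-KS, and (ii) every B-KS not of this form satisfies $|S_A|\cdot|S_B|>(N-2)^2$. Throughout I will invoke the $\mathfrak{S}_N$-action on $J(N,2)$ to reduce each argument to a canonical choice of indices.

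For direction (i), assume for contradiction that a classical assignment $(f_A,f_B)$ exists. Constraints (2a)--(2b) let me encode $f_A,f_B$ as functions $g_A:\mathcal{I}_A\to\{1,\dots,N\}$ and $g_B:\mathcal{I}_B\to\{1,\dots,N\}$ with $g_\bullet(m)\ne m$; constraint~(1) applied to two vectors in a common basis forces $g_A=g_B$ on $\mathcal{I}_A\cap\mathcal{I}_B$, and since $\{i,j\}\cap\{k,\ell\}=\emptyset$ these glue into a single function $g:\{1,\dots,N\}\to\{1,\dots,N\}$. Write $e_m:=\{m,g(m)\}$ and $\mu(e):=|\{m:e_m=e\}|\in\{0,1,2\}$. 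The key step is a \emph{reciprocity} property: for $m\ne m'$ with $(m,m')\notin\{i,j\}^2\cup\{k,\ell\}^2$, constraint~(1) forces $e_m$ and $e_{m'}$ to be disjoint or equal, so in particular $g(m)=m'$ implies $g(m')=m$. Reciprocity then implies that $\mu(e)=1$ is possible only for $e\in\{\{i,j\},\{k,\ell\}\}$; combined with $\sum_e\mu(e)=N$ odd and the fact that reciprocal edges contribute an even amount, exactly one of these two edges must be non-reciprocal. A brief case analysis closes the argument: if $\{k,\ell\}$ is the non-reciprocal one with $g(k)=\ell$ and $g(\ell)=n\ne k$, then $n\in\mathcal{I}_B$, reciprocity yields $g(n)=\ell$, and the edges $e_k=\{k,\ell\}$ and $e_n=\{n,\ell\}$ share only $\ell$, while the pair $(k,n)$ is non-exceptional---violating constraint~(1). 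The case $\{i,j\}$ non-reciprocal is symmetric.

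For direction (ii), I construct explicit classical assignments. (a) If $\min(|S_A|,|S_B|)\le N-3$, say $|T_B|\ge 3$, select $\{k_1,k_2,k_3\}\subseteq T_B$ and let Alice play a short cycle (of length $2$ or $3$, depending on $|T_A\cap\{k_1,k_2,k_3\}|$) on the subset she possesses: the cycle's output edges live entirely within $T_B$, so cross-compatibility reduces to asking Bob's outputs to avoid $T_B$, and the remaining bases of $\mathcal{I}_A\cup\mathcal{I}_B$ form an even-cardinality set (by a parity check on $N$ against the cycle length and $|T_A\cap T_B|$) that can be completed by a perfect matching. (b) If $|S_A|=|S_B|=N-2$ with $T_A\cap T_B\ne\emptyset$, WLOG $i=k$ and $j\ne\ell$: pair $j\in\mathcal{I}_B$ with $\ell\in\mathcal{I}_A$ reciprocally via the shared edge $\{j,\ell\}$, and match the remaining $N-3$ (even) bases pairwise. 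In every construction all output edges are pairwise vertex-disjoint or equal, so compatibility is immediate.

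The main obstacle I anticipate is the case analysis in~(ii)(a), where the interaction between $T_A$ and $T_B$ yields several sub-configurations (in particular when fewer than three of the $k_\alpha$ lie in $S_A$, or when $T_A\cap T_B$ is non-empty of various sizes); each, however, reduces to choosing the right cycle length and verifying a parity condition on the residual matching set. Direction~(i) is the conceptual core: reciprocity localises all possible deviations from a perfect matching to the two ``fault-line'' edges $\{i,j\}$ and $\{k,\ell\}$, after which the parity argument mimics the standard Kochen-Specker proof for $J(N,2)$ (odd number of bases versus even-sized bases) recalled earlier in the paper.
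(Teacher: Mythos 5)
Your direction~(i) is correct and is a genuinely different route from the paper. The paper shows that a classical assignment for the stated $(S_A,S_B)$ extends to one for $(\mathcal{B},\mathcal{B})$, which contradicts the KS property of the full set; you instead re-derive the contradiction directly, encoding the assignment as a single function $g$ on $\{1,\dots,N\}$ (using that $\{i,j\}\cap\{k,\ell\}=\emptyset$ lets $g_A,g_B$ glue), proving reciprocity for non-exceptional pairs, and running the parity count. Both ultimately use the odd-bases-versus-even-bases parity, but your argument is self-contained and arguably cleaner: reciprocity localises the only possible defects to the two ``fault-line'' edges $\{i,j\},\{k,\ell\}$, parity forces exactly one of them to be non-reciprocal, and a short two-step application of constraint~(1) kills that possibility. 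I verified the sub-cases ($g(k)=\ell$ vs.\ $g(\ell)=k$, and the $\{i,j\}$ mirror) and they all close.

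Direction~(ii) has genuine gaps. In~(ii)(a) your ``short cycle'' construction only makes sense when Alice actually possesses at least two of $\{k_1,k_2,k_3\}$; if $|T_A\cap\{k_1,k_2,k_3\}|\ge 2$ there is no cycle, and when $T_B\subseteq T_A$ one cannot even choose $\{k_1,k_2,k_3\}$ to avoid this. You flag the case analysis as a ``main obstacle'' but do not actually resolve it, and it is not obvious the parity condition can always be arranged as claimed. The paper sidesteps all of this with a one-line monotonicity reduction: since B-KS-ness is monotone under enlarging either side, it suffices to exhibit an assignment for $(\mathcal{B},S_B)$ with $|T_B|\ge 3$ (or symmetrically $(S_A,\mathcal{B})$), which collapses your casework to the single $T_A=\emptyset$, $3$-cycle case with an even residue of $N-3$. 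You should use that reduction. In~(ii)(b) you treat only $|T_A\cap T_B|=1$; the case $T_A=T_B$ (so $|\mathcal{I}_A|=|\mathcal{I}_B|=N-2$ is odd) is not ``WLOG $i=k$ and $j\ne\ell$'' and requires a separate assignment --- e.g.\ a near-perfect matching on $\mathcal{I}_A=\mathcal{I}_B$ with the leftover basis pointing into $T_A$. The paper covers both sub-cases uniformly with the explicit shared assignment $f_A(v^{i,i+1},b_i)=f_B(v^{i,i+1},b_i)=1$ for even $i$ and $f_A(v^{j,j-1},b_j)=f_B(v^{j,j-1},b_j)=1$ for odd $j$ (restricted to valid indices), which you may want to compare against yours.
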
 

\begin{proof}
We first show that $\min(|S_A|, |S_B|)\geq N-2$. Let us proceed by contradiction, assuming without loss of generality that $|S_A| \leq N - 3$, and showing that a classical assignment exists in such a case. We may assume that $S_B = \mathcal{B}$.
Additionally, by exploiting the symmetries of $J(N,2)$, we may assume without loss of generality that $b_1, b_2, b_3 \in \mathcal{B} \setminus S_A$.  Then set 
\begin{equation}
    f_B(v^{1,2}, b_1), f_B(v^{1,2}, b_2), f_B(v^{1,3}, b_3) := 1.
\end{equation}
Since $N$ is odd, this leaves an even number of orthogonal bases to label for $S_B$. By exploiting the fact that the intersection of each pair of orthogonal bases is a single vector, we are able to greedily label the remaining orthogonal bases of $S_B$ -- we simply pick an unlabeled pair of orthogonal bases $b_i, b_j$ and set:
\begin{align}
    f_B(v^{i,j},b_i), f_B(v^{i,j},b_j) := 1.
\end{align}
We continue this process until all orthogonal bases of $S_B$ have been labeled. 
Then, we label the orthogonal bases of $S_A$ (which are each in $S_B$) exactly the same as we have labeled the bases of $S_B$, so:
\begin{align}
    f_A(v,b) := f_B(v,b)
\end{align}
for each $b \in S_A \subseteq S_B, v\in b.$
This constitutes a classical assignment, and so we see that such a pair $(S_A, S_B)$ cannot be a B-KS set. Therefore, $\min(|S_A|, |S_B|) \geq N-2$.

Next, we show that the choices of $(S_A, S_B)$ described in the theorem statement are indeed bipartite KS sets. Let $i,j,k,\ell \in \{1,2,\dots,N\}$ be distinct, and let $S_A, S_B$ be defined as in Eqs.~\eqref{eq:S_A}--\eqref{eq:S_B}. By symmetry, we may assume without loss of generality that $i~=~1,j~=~2, k~=~3, \ell=4$, so that $S_A = \{b_3, b_4, \dots, b_N\}$ and $S_B = \{b_1, b_2, b_4, \dots, b_N\}.$

We prove that $(S_A, S_B)$ is a B-KS set by showing that the existence of a classical assignment $(f_A, f_B)$ of $(S_A, S_B)$ implies the existence of a classical assignment $(f'_A, f'_B)$ of $(\mathcal{B}, \mathcal{B})$. This would contradict the fact that $(\mathcal{B}, \mathcal{B})$ is a bipartite KS set, thus that $(\mathcal{V}, \mathcal{B}$) is a KS set, and finally that $\mathcal{V}$ is an orthogonal representation of $J(N,2)$ in dimension $N-1$. 

Since $(f_A, f_B)$ is a classical assignment, we must have that for each $b \in S_A \cap S_B$, 
\begin{equation}
    f_A(v,b) = f_B(v',b) \iff v = v'
\end{equation}
otherwise Condition 1 of the classical assignment definition is violated. Consider extending $(f_A, f_B)$ to an assignment of $(\mathcal{B}, \mathcal{B})$ by labeling the missing bases of $S_A$ with the choices made in $f_B$ (i.e setting $f_A(v,b_1) := f_B(v,b_1)$ for each $v \in b_1$, $f_A(v',b_2) := f_B(v',b_2)$ for each $v' \in b_2$) and vice-versa. This either forms a classical assignment of $(\mathcal{B}, \mathcal{B})$, or Condition 1 is violated in the process. In the case that a violation occurs, it must be between $b_1$ and $b_2$ or $b_3$ and $b_4$. This is because every other pair of bases has one point labeled by $f_A$ and the other by $f_B$. Therefore, we may assume without loss of generality that
\begin{equation}
    f_B(v^{1,2}, b_1) = f_B(v^{2,m}, b_2) = 1
\end{equation}
for some $m \geq 3$ so that $|\{1,2\} \cap \{2,m\}| = 1$. However such a scenario is impossible since there is a violation of Condition 1 for any choice of labeling of $b_m$ for $f_A$: setting $f_A(v^{m,m'},b_m) = 1$ for any $m' \neq m$, we find that either $|\{m,m'\} \cap \{1,2\}| = 1$ or $|\{m,m'\} \cap \{2,m\}| = 1.$ Thus, the classical assignment $(f_A, f_B)$ of $(S_A, S_B)$ may indeed be extended to an assignment $(f'_A, f'_B)$ of $(\mathcal{B}, \mathcal{B})$.

As stated earlier, this would imply that $\mathcal{V}$ does not define an orthogonal representation of $J(N,2)$ in dimension $N-1$, which is a contradiction. Therefore, $(S_A, S_B)$ must indeed be a B-KS set, and since $\min(|S_A|, |S_B|) \geq |\mathcal{B}| - 2$, this is optimal. 

Any other optimal B-KS set $(S'_A, S'_B)$ must also have $|S'_A| = |S'_B| = |\mathcal{B}| - 2$. If it is not of the type described above, then the B-KS set must correspond to two (possibly equal) vectors in a common orthogonal basis, say $b_N$ without loss of generality (again each vector corresponds exactly to the pair of orthogonal bases missing from $S'_A$ or $S'_B$). In this case, there is a simple classical assignment: setting $f_A(r_{i,i+1}, b_i) = f_B(r_{i,i+1}, b_i) = 1$ for each valid choice of even $i$, and $f_A(r_{j,j-1}, b_j) = f_B(r_{j,j-1}, b_j) = 1$ for each valid choice of odd $j$. By ``valid choice'' we mean that one should simply ignore assigning $f_A$ for the $b_i$ or $b_j$ that are not in $S_A$, and do the same replacing $A$ with $B$.

Therefore, we see that the optimal B-KS sets are exactly those sets $(S_A, S_B)$ in the statement of the theorem.
\end{proof}

In the context of the colored odd-pointed star games, we have shown the following. For any input choice of $N-2$ lines for Alice and $N-2$ lines for Bob with no line missing for both Alice and Bob (i.e each of the possible $N$ lines from the $J(N,2)$-configuration is given to either Alice or Bob or both), there is no perfect classical strategy. Moreover, these are exactly the choices with minimal input size. We record this below.

\begin{corrolary} \label{cor.Optimal}
    Giving each of Alice and Bob $N-2$ lines so that none of the $N$ lines are missing for both Alice and Bob yields the minimum input cardinality $|A||B|$ for the colored odd-pointed star games. With this choice of input sets $A,B$ there is a PQS but no perfect classical strategy. Any choice of input sets not of this form either has larger input cardinality $|A||B|$ or a perfect classical strategy.
\end{corrolary}


\section{Bell inequalities} \label{sec.BellIneqs}


Although the main interest of PQSs is that they are perfect rather than merely better than classical, a legitimate question is how much better than classical they are. This is equivalent to asking which are the Bell inequalities associated to the colored odd-pointed star games. Related to this is also the question \cite{Gisin:2007IJQI} of whether 
Bell inequalities associated to PQSs are tight (i.e., facets of the local polytope) or not. Recall that, for the simplest PQSs (the GHZ-Mermin \cite{GHZ89,Mermin:1990PT} and magic square \cite{Cabello:2001PRLb} correlations), the corresponding Bell inequalities (\cite{Mermin:1990PRLa,Cabello:2001PRLb}, respectively) are tight (as proven in \cite{Sliwa2003PLA,Gisin:2007IJQI}, respectively). However, recently, it has been shown \cite{Liu:2023XXX} that there are PQSs whose associated Bell inequalities are not tight.

In this section we present the general form of the Bell inequalities associated to the colored odd-pointed star games and prove that neither of them is tight.


\subsection{Bell inequalities associated to the colored odd-pointed star games}


Here we introduce the linear combination of probabilities, i.e., Bell inequality, that is associated to the game described in Sec.~\ref{sec.TheGame}. We first write the Bell inequality associated to the colored odd-pointed star game with $N=7$, and then we generalize the analysis for an arbitrary $N$. It is convenient for this purpose to use the configuration view of the colored odd-pointed star game.

In order to write the Bell inequality associated to the $J(7,2)$-configuration, we are going to momentarily ignore the line's colors and exploit the symmetries that appear when considering that Alice and Bob could receive any of the seven lines of the $J(7,2)$-configuration as input.

As before, Alice and Bob have six possible outputs, corresponding to the points in their respective lines. For convenience, we order the points of each Line $i$ in lexicographical order: so $\{i,j\}$ comes before $ \{i,j'\} $ if and only if $j < j'$. For example, for Line $5$ of the $N=7$ case, this produces the order $15, 25, 35, 45, 56, 57$.
This allows us to represent each of the outputs of Alice and Bob in a manner that does not depend on the input line. 
 
More precisely, we set Alice's outputs to be $a=1, \dots, 6$, where choosing $a=n$ corresponds to choosing the $n^{th}$ point on the input line in the order we have just described. To emphasize, if Alice outputs $a=6$ for Line $5$ described above, this corresponds to Alice choosing point $57$ (and \emph{not} point $56$ as in the notation used in previous sections). We emphasize this further in a remark to avoid any confusion.

\begin{remark}
    In this section only, the choice of output $n \in \{1,2,\dots,6\}$ (and more generally $\{1,2,\dots,N-1\}$) for either Alice and Bob corresponds to choosing the $n^{th}$ point on their given input line. 
\end{remark}

With these preparations, we can define the coefficients that represent the winning conditions of the game as stated in Sec.~\ref{sec.TheGame}:
\begin{equation}\label{eq.cabxy}
c(a,b|x,y) = \left\{ \begin{array}{r@{\quad}c@{\quad}l} \delta(a=b),&\mathrm{for}~x=y \\
c'(a,b|x,y),&\mathrm{for}~x<y \\
c(b,a|y,x),&\mathrm{for}~x>y, 
\end{array} \right.
\end{equation}
where $\delta(z)=1 (0)$ when $z=$True (False) and 
\begin{equation}
\begin{split}
    c'(a,b|x,y) &= \delta(a \neq b)\\
    &+ \delta(a=b)\delta(x\leq b \leq y-1)\\
    &- \delta(a=b-1)\delta(x<b<y)\\
    &- \delta(a=y-1)\delta(b\neq x) \\
    &- \delta(a \neq y-1)\delta(b=x),
\end{split}
\end{equation}    
where $a,b=1,\dots,6$ and $x,y=1,\dots,7$. 

Once we have the coefficients $c(a,b|x,y)$ defined, we can include the lines' colors and write the Bell inequality associated to the $J(7,2)$-configuration
\begin{equation}\label{eq.heptagram}
\sum_{x = \text{red}} \sum_{y = \text{blue}}\sum_{a,b} c(a,b|x,y) p(a,b|x,y) \leq 24,
\end{equation}
where the summation over red (blue) denotes that Alice (Bob) is going to only be inquired for the red (blue) lines. In the colored odd-pointed star games, with $N=7$, these summations represent $x=1,2,3,4,5$ and $y=3,4,5,6,7$.

To generalize this Bell inequality for the colored odd-pointed star game we define the possible inputs as before to be any choice from $\{1,2,\dots,N\}$ (where Alice will be given input from $\{1,2,\dots,N-2\}$ and Bob will be given input from $\{3,4,\dots,N\}$), and the output set to be $\{1,2,\dots,N-1\}$.

Then, we can write the colored odd-pointed star game as the following Bell inequality
\begin{equation}\label{eq.Ngram}
\sum_{x = \text{red}} \sum_{y = \text{blue}}\sum_{a,b} c(a,b|x,y) p(a,b|x,y) \leq (N-2)^2-1,
\end{equation}
with the coefficients $c(a,b|x,y)$ defined in Eq.~\eqref{eq.cabxy} for $a,b=1,\dots, N-1$ and $x,y=1,\dots,N$.


\subsection{Proof that the colored odd-pointed star Bell inequalities are not tight}


Here we give an upper bound to the maximum dimension of the faces defined by Eq.~\eqref{eq.heptagram} and we generalize this method for the case of the colored odd-pointed star game Eq.~\eqref{eq.Ngram}. As a consequence of this upper bound, we conclude that these hyperplanes do not define facets of the classical polytope, i.e., they are not tight Bell inequalities.

For our argument it is enough to focus on the lines that are common to Alice and Bob. For the $J(7,2)$-configuration these are $x',y'=\{3,4,5\}$. It is helpful to arrange these combinations in a table, see Table~\ref{tab:game1}, and write the coefficients $c(a,b|x,y) = M_{x,y}$ explicitly as matrices
\begin{equation}\label{eq.M01}
M_{3,4} = \begin{pmatrix}
0 & 1 & 0 & 1 & 1 & 1\\
1 & 0 & 0 & 1 & 1 & 1\\
0 & 0 & 1 & 0 & 0 & 0\\
1 & 1 & 0 & 0 & 1 & 1\\
1 & 1 & 0 & 1 & 0 & 1\\
1 & 1 & 0 & 1 & 1 & 0\\
\end{pmatrix},
\end{equation}
\begin{equation}\label{eq.M02}
M_{3,5} = \begin{pmatrix}
0 & 1 & 0 & 1 & 1 & 1\\
1 & 0 & 0 & 1 & 1 & 1\\
1 & 1 & 0 & 0 & 1 & 1\\
0 & 0 & 1 & 0 & 0 & 0\\
1 & 1 & 0 & 1 & 0 & 1\\
1 & 1 & 0 & 1 & 1 & 0\\
\end{pmatrix},
\end{equation}
\begin{equation}\label{eq.M12}
M_{4,5} = \begin{pmatrix}
0 & 1 & 1 & 0 & 1 & 1\\
1 & 0 & 1 & 0 & 1 & 1\\
1 & 1 & 0 & 0 & 1 & 1\\
0 & 0 & 0 & 1 & 0 & 0\\
1 & 1 & 1 & 0 & 0 & 1\\
1 & 1 & 1 & 0 & 1 & 0\\
\end{pmatrix}.
\end{equation}
Moreover, due to Eq.~\eqref{eq.cabxy} we have that $M_{j,i} = M_{i,j}^t$. Where we denote by $A^t$ the transpose of matrix $A$.
\begin{table}[]
    \centering
    \begin{tabular}{c|c|c|c}
        $c(a,b|x,y)$ &  $y=3$ & $y=4$ & $y=5$ \\
         \hline
      $x=3$   &  $\delta_{a,b}$ &  $M_{3,4}$ &  $M_{3,5}$ \\
      $x=4$   &  $M_{4,3}$ &  $\delta_{a,b}$ &  $M_{4,5}$ \\
      $x=5$   &  $M_{5,3}$ &  $M_{5,4}$ &  $\delta_{a,b}$ 
    \end{tabular}
    \caption{Combination of inputs used to prove that the colored odd-pointed star game does not define a facet of the local polytope.}
    \label{tab:game1}
\end{table} 

Our argument is based on the next observation. To saturate the local bound of Eq.~\eqref{eq.heptagram}, i.e., to reach $24$, a local strategy must only lose in one combination of inputs $(x,y)$. We can show that a local strategy that loses for $(x=3,y=4)$ cannot reach the local bound of $24$. In order to see this, first, let us consider the deterministic strategies $\delta_{a,f(x)}$ and $\delta_{b,g(y)}$ used by Alice and Bob, respectively. Where $f(x)$ and $g(y)$ are deterministic assignments of outputs for each input. Suppose that a local strategy $p^L(a,b|x,y)=\delta_{a,f(x)} \delta_{b,g(y)}$ loses only for the inputs $(x=3,y=4)$. This means that $\sum_{a,b} c(a,b|3,4)\delta_{a,f(3)} \delta_{b,g(4)} = 0$, or equivalently $c(f(3),g(4)|3,4)=0$. Moreover, since to win in $(x=3,y=3)$ and $(x=4,y=4)$ both deterministic strategies have to be the same, due to $c(a,b|z,z)=\delta_{a,b}$, we have that $f(3)=g(3)$ and $f(4)=g(4)$. This implies that this strategy also loses for $(x=4,y=3)$ due to
\begin{equation}
\begin{split}
    \sum_{a,b} c(a,b|4,3)\delta_{a,f(4)} \delta_{b,g(3)} &= c(f(4),g(3)|4,3)\\
     &= c(g(4),f(3)|4,3)\\
     &= 0,
\end{split}
\end{equation}
where in the last step we have used the symmetry $c(a,b|x,y) = c(b,a|y,x)$. This proves that any local strategy that loses for $(x,y) = (3,4)$ will also lose for $(x,y) = (4,3)$ reaching at most a value of 23. The same argument can be applied for any combination of inputs $x' \neq y'$, with $x'$ and $y'$ being the common lines of Alice and Bob. 

The above argument shows that among all the local strategies that reach $24$, none of them will lose in $(x,y) = (3,4),(4,3),(3,5),(5,3),(4,5),(5,4)$. As a consequence a linear combination of all the local points that reach 24 would create a vector that has entries equal to zero whenever $c(a,b|x' \neq y')=0$. This is already an indication that the saturating points are contained in a subspace of relatively low dimension. Next we show that this subspace is of a lower dimension than the necessary to be a facet. 

For a Bell inequality to define a facet of the local polytope, the set of points that saturate its local bound should span an affine subspace of dimension $dim(NS)-1$. $dim(NS)$ is the dimension of the no-signaling space and it is given by $dim(NS) = |X||Y|(|A|-1)(|B|-1) + |X|(|A|-1) + |Y|(|B|-1)$. For our case, $(|X|=|Y|=5,~|A|=|B|=6)$, we have $dim(NS) = 675$. As it can be noted, so far we have parameterized the correlations $\Vec{p}=\{ p(a,b|x,y)\}_{abxy}$ using a space of dimension $|X||Y||A||B|$, which is larger than $dim(NS)$. In order to represent the correlations in a space of dimension $dim(NS)$ we can use the parametrization introduced in \cite{CG2004JoP}, also known as Collins-Gisin (CG) notation. In this notation the correlations $\Vec{p}$ are parameterized using $dim(NS)$ probabilities such that $\Vec{p} = \{p(a,b|x,y), p^A(a|x), p^B(b|y)\}$, where $p^A$ ($p^B$) denotes the marginal probability of Alice (Bob). And, we also have that $a = 1,\dots, |A|-1$ and $x = 1,\dots, |X|$, while $b = 1,\dots, |B|-1$ and $y = 1,\dots, |Y|$. The CG notation is useful here because once we have removed the redundancy on the dimension of the correlation space we can directly translate the number of zeros of a linear combination of vectors with the dimension of its orthogonal complement.

Now the joint part of the correlations $p(a,b|x,y)$ will be described by a $5 \times 5$ matrix, instead of $6 \times 6$, for every combination of inputs $(x,y)$. Therefore we should only consider the first $5 \times 5$ entries in Eqs.(\ref{eq.M01}-\ref{eq.M12}), or in other words ignore the last row and the last column of these matrices. Then, a linear combination of all the saturating points will have: 12 zeros coming from $M_{3,4}$, 12 zeros from $M_{3,5}$, and 12 zeros from $M_{4,5}$, which including the coefficients in the transposition $M^t_{x,y}$ reaches a total of 72 zeros. Therefore in the most favorable case the maximum dimension of the hyperplane defined by Eq.~\eqref{eq.heptagram} would be $dim(NS)-72$. This upper bound on the dimension of the hyperplane defined by Eq.~\eqref{eq.heptagram} shows that it is not a facet.

Now we extend this proof to the family of colored odd-pointed star games. Clearly for a colored odd-pointed star game the number of common lines is larger. But, since we only aim at proving that the dimension of the hyperplane is lower than $dim(NS)-1$, we can focus only on two of the analogous lines we have already analyzed in the case of the $J(7,2)$-configuration $x=y=3,4$. These lines in general would be defined as:
\begin{equation}
\begin{split}
    x=y=3:&~\{13,23,34,\dots,3(N-1),3N\},\\
    x=y=4:&~\{14,24,34,\dots,4(N-1),4N\}.\\
\end{split}
\end{equation}
The analogous matrix to $M_{3,4}$ would have $3(N-2)$ zeros. The zeros that will be removed are the ones placed in $(a,b) = (N-1,1),(N-1,N-1)$, and $(3,N-1)$. Therefore, a linear combination of saturating points would have at least $3(N-3)$ zeros that correspond to the zeros in $(x,y) = (3,4)$. Thus the total number of zeros, including the transposition $(x,y) = (4,3)$, will be at least $6(N-3)$. Consequently, the dimension of the hyperplane will be lower than $dim(NS)-6(N-3)$, with $N\geq 7$. Hence Eq.~\eqref{eq.Ngram} does not define a facet.


\section{Advantages and weak points} \label{sec.Comparison}


Here, we summarize the advantages of the colored odd-star PQSs with respect to other bipartite PQSs.


\subsection{Bipartite PQSs based on qudit-qudit maximally entangled states and KS sets}


Existing bipartite PQSs based on qudit-qudit maximally entangled states and KS sets \cite{Stairs:1983PS,HR83,Brown:1990FPH,Elby:1992PLA,Renner2004b,CHTW04,Aolita:2012PRA} are constructed by one of two methods: 
either both Alice and Bob's measurement settings are all the bases of the corresponding KS set or the settings of one of the parties are all the bases while the settings of the other party are all the vectors.  This implies that the resulting bipartite PQSs largely fail to have a ``small'' input cardinality,
since, e.g., the simplest known KS sets in $d=3$ are the the Conway-Kochen $31$-vector KS set \cite{Peres:1993}, which has $17$ orthogonal bases, and the Peres $33$-vector set \cite{Peres:1991JPA}, which has $16$ orthogonal bases. In contrast, each of the KS sets that we are using have just $d+1$ orthogonal bases and the number of local settings of our PQSs is $d-1$. 


\subsection{Bipartite PQSs based on magic sets}


A simple way to achieve bipartite PQS with small input cardinality and high local dimension $d$ is to consider parallel repetitions \cite{Coladangelo2016arxiv,Coudron2016arxiv,Araujo:2020Quantum} of bipartite PQS of small input cardinality \cite{Cabello:2001PRLb,Mancinska:2007} based on magic sets.
For example, $N$ parallel repetitions of the magic square produce bipartite PQS with $d=4^N$, using only $3^N$ settings. Compared to this, the advantages of the approach presented in this paper are the following: (i) It generates bipartite PQSs with small input cardinality in every dimension $d = 2^kp^m$ with $m \geq k \geq 1$ for $p$ prime, $d =8p$ for $p>19$ prime, and also in other sporadic cases, rather than only when $d=(2^n)^N$, with $n \ge 2$. (ii) It produces ``genuinely'' $d$-dimensional PQSs in the sense that they cannot be achieved as parallel repetitions of the same smaller dimensional PQSs.


\subsection{Bipartite PQSs based on state-independent contextuality sets of vectors}


Existing bipartite PQSs based 
on state-independent contextuality sets of vectors whose corresponding orthogonality graph has fractional packing number equal to the Lov\'asz number and larger than the independence number, specifically those corresponding to
Pauli and Newman states \cite{ZSSLPC2023}, have two disadvantages: (i')~They only exist in specific dimensions. (i'')~They have a very large input cardinality (see \cite{xu2023stateindependent} for details).


\subsection{Noise} \label{sec.QSWPV}


The arguably main problem of {\em all} existing bipartite PQSs is their high sensitive to noise. 
Any noise rapidly pushes us out of perfection. Not only that: even small noise rapidly kills the quantum advantage. 

Here, we first compute the visibility (as defined below) required to beat the classical winning probability of the PQSs for the colored odd-star games. As we will see, it is very high. Then, we present two ways to mitigate this problem.


\subsubsection{Visibility for the colored odd-star game}


Suppose that the initial state is not the required maximally entangled state in local dimension $d$ but, instead, the Werner state
\begin{equation*}
    \rho(V_d) = V_d \ket{\psi}\bra{\psi} + \frac{1-V_d}{d^2}\mathds{1}_{d^2},
\end{equation*}
where $\mathds{1}_{d^2}$ denotes the $d^2 \times d^2$ identity matrix, $0 \leq V_d \leq 1$ is the visibility, and
\begin{equation}
    \ket{\psi} = \frac{1}{\sqrt{d}} \sum\limits_{k=0}^{d-1} \ket{k} \otimes \ket{k}.
\end{equation}

Then, if Alice measures $A_v = \ket{v}\bra{v} \otimes \mathds{1}_d$, and Bob measures $B_w = \mathds{1}_d \otimes \ket{w}\bra{w}$, the probabilities of the four possible outcomes are
{\tiny{
\begin{subequations}
    \begin{align}
    P(A_v=1, B_w=1 | \rho) &= \tr[\left(\ket{v}\bra{v} \otimes \ket{w}\bra{w}\right)\rho], \\
    P(A_v=1, B_w=0 | \rho) &= \tr[\left(\ket{v}\bra{v} \otimes (\mathds{1}_{d} - \ket{w}\bra{w})\right)\rho],   \\
    P(A_v=0, B_w=1 | \rho) &= \tr[\left((\mathds{1}_{d} - \ket{v}\bra{v}) \otimes \ket{w}\bra{w}\right)\rho], \\
    P(A_v=0, B_w=0 | \rho)  &= \tr[\left((\mathds{1}_{d} - \ket{v}\bra{v}) \otimes (\mathds{1}_{d} - \ket{w}\bra{w})\right)\rho].   
\end{align}
\end{subequations}
}}
In order to simplify the computation, we note that this computation is only necessary for our purposes when $v = w$ or when $v$ and $w$ are orthogonal. Thus, by symmetry, we only need to do computations using the two vectors $\ket{s}=\ket{0}^{\otimes d}$ and $\ket{t}=\ket{0}^{\otimes d-1} \otimes \ket{1}$, so that the above probabilities need only be computed with $v=s, w=s$ and with $v=s, w=t$. We now derive one of the eight possibilities by hand, noting that the others follow in a similar manner:
\begin{equation}
\begin{split}
    &P(A_s=1, B_s=1| \rho) \\
    =&\tr\left[\left(\ket{s}\bra{s} \otimes \ket{s}\bra{s}\right)\rho\right] \\
   =&V_d\tr\left[\left(\ket{s}\bra{s} \otimes \ket{s}\bra{s}\right)\ket{\psi}\bra{\psi}\right] \\
    &+\frac{1-V_d}{d^2}\tr\left[\left(\ket{s}\bra{s} \otimes \ket{s}\bra{s}\right)\right] \\
    =&\frac{(d-1)V_d + 1}{d^2}.
\end{split}
\end{equation}

Using this (and appropriate expressions for the other probabilities), we compute the quantum winning probability. As in the computation of the Bell inequality (\ref{eq.heptagram}), we consider the two cases:
\begin{enumerate}
    \item Alice and Bob receive the same line as input,
    \item Alice and Bob receive different lines as input.
\end{enumerate}

In Case (1), Alice and Bob win exactly when they make the same choice of point from the $d$ choices. This is equal to $d \cdot P(A_s=1, B_s=1 | \rho)$, which is equal to
\begin{equation}
    P_1 := \frac{(d-1)V_d+1}{d}.
\end{equation}

Now let us consider Case (2). Here, we compute the probability of failure. This happens in two different ways: either (2a) Alice and Bob disagree on their shared point (one of Alice or Bob outputs the intersection point of their two lines, and the other does not) or (2b) Alice and Bob choose points that are in a common (third) line.

The probability of failing in Case (2a) is 
\begin{equation}
\begin{split}
    &P(A_s=1, B_s=0|\rho) + P(A_s=0, B_s=1|\rho) \\
 = &\frac{2(d-1)(1-V_d)}{d^2}.
\end{split}
\end{equation}
The probability of failing in Case (2b) is the sum of probabilities that Alice and Bob measure orthogonal vectors over each of the $d-1$ pairs of points that occur in a common third line (for example, if Alice is given Line $1$ of the $J(7,2)$-configuration and Bob is given Line $2$ of the $J(7,2)$-configuration, then these pairs of points would be $(13,23), (14,24), (15,25),(16,26),(17,27)$). Thus, the probability of failing in Case (2b) is
\begin{align}
    &(d-1)P(A_s=1, B_t=1|\rho) \\
    = & \frac{(d-1)(1-V_d)}{d^2}
\end{align}
remarking again that $\ket{s}$ and $\ket{t}$ are orthogonal vectors.
Therefore, the probability of success in Case 2 is 
\begin{equation}
    P_2 := 1 - \frac{3(d-1)(1-V_d)}{d^2}.
\end{equation}
Finally, Alice and Bob have $d-1$ lines each, $d-3$ of which are in common, and so there are:
\begin{itemize}
    \item $d-3$ input choices where Alice and Bob share a common line,
    \item $(d-1)^2 - (d-3)$ input choices where Alice and Bob are given different lines.
\end{itemize}
Assuming that any pair of lines is equally likely to be chosen, we find that quantum winning probability is
\begin{equation}
    W := \frac{(d-3)P_1 + \left((d-1)^2 - (d-3)\right)P_2}{(d-1)^2}.
\end{equation}

On the other hand, there is a classical strategy that wins the game with probability $1 - \frac{1}{(d-1)^2}$, losing a single time out of the possible $(d-1)^2$ outcomes. 

In order to simplify our explanation, we assume that Alice may be given all of the lines except for $N-3$ and $N-2$ as input and that Bob may be given any of the lines except for $N-1$ and $N$. For each odd $1 \leq i \leq N-1$, Alice and Bob choose the point $\{i,i+1\}$ whenever either of them are given Line $i$ or $i+1$ as input. Finally, if Alice is given Line $N$, then she chooses the point $\{N-1, N\}$.

One may check that with the above strategy Alice and Bob only fail if Alice is given Line $N$ and Bob is given Line $N-2$ since Alice then chooses $\{N-1, N\}$ and Bob chooses $\{N-2, N-1\}$. These both lie on the common line $N-1$, and so Alice and Bob lose in this case. By the Bell inequality (\ref{eq.Ngram}), it follows that this classical strategy is optimal. 

Therefore, in order for the quantum strategy to improve on the classical strategy, we must have 
\begin{equation}
    W \geq 1 - \frac{1}{(d-1)^2},
\end{equation}
which is accomplished exactly when
\begin{equation}
    V_d \geq \frac{(d-2)(4d^2-9d+6)}{4(d-1)(d^2 - 3d + 3)}.
\end{equation}
In the cases of $d = 6,8,10$, we get $32/35 \approx 0.914$, $285/301 \approx 0.947$, and $632/657 \approx 0.962$, respectively for the RHS, and for $d$ large, the RHS approaches $1$.


\subsubsection{Visibility for the non-colored odd-star line-line game}


In this subsection and the following we compare the colored odd-pointed star game against two natural games which could be played on the star. In the current subsection, we consider the same game that we have described, except that both Alice and Bob may each receive any of the $N$ lines as input. We call this the \emph{non-colored odd-star line-line game}.

Here, Alice and Bob can win with probability $1-\frac{4}{(d+1)^2}$ by using the following rule when receiving Line $i$ as input:
\begin{itemize}
    \item if $i$ is odd and $i \neq N$, then they both choose point $\{i,i+1\}$,
    \item if $i$ is even or $i=N$, then they both choose point $\{i, i-1\}$.
\end{itemize}
With this strategy Alice and Bob lose exactly in the four cases when one of them is given Line $N-2$ or $N-1$ and the other is given Line $N$. They win in all other cases. 

In order to show that this strategy is optimal, we consider two cases:
\begin{itemize}
    \item when Alice and Bob use the same strategy (i.e., choose the same point for each line),
    \item when Alice and Bob do not use the same strategy (i.e., there is at least one line for which Alice and Bob choose different points).
\end{itemize}
Let us consider the first case. Since $N$ is odd, the set $\{1,2,\dots,N\}$ cannot be partitioned into sets of size $2$. Therefore, we see that there must be lines $i,j$ for which Alice and Bob choose the point $\{i,j\}$ in Line $i$, but do not choose the point $\{i,j\}$ in Line $j$. Let us say that Alice and Bob instead choose point $\{j,k\}$ in Line $j$ for some $k \neq i$. We immediately see that Alice and Bob lose if Alice is asked Line $i$ and Bob is asked Line $j$ and vice-versa. Now consider Line $k$. No matter what point we choose for Line $k$, we see that Alice and Bob must fail in at least one of the following two scenarios:
\begin{itemize}
    \item Alice is given Line $i$ and Bob is given Line $k$ (and vice-versa),
    \item Alice is given Line $j$ and Bob is given Line $k$ (and vice-versa).
\end{itemize}
Therefore, we see that Alice and Bob must indeed fail at least 4 times in any shared strategy.


Now let us consider the second case. Here Alice and Bob disagree on some Line $i$, and so must fail if both asked for this line. Without loss of generality we may assume that the line is Line~$1$, and that Alice chooses $12$ and Bob chooses $13$ when both are given this line. The following fact allows us to simplify matters significantly:
\begin{proposition}
    Assume that Alice chooses $12$ and Bob chooses $13$ when given Line $1$ as input. 
    Let $2 \leq i \leq N$. For Line $i$ if Alice or Bob make any choice from the points $\{1,i\}, \{2,i\}, \{3,i\}$, then Alice and Bob will fail for
    some choice of lines which includes Line $i$ and Line $j$ for some $j \in \{1,i\}$.
\end{proposition}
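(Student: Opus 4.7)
The plan is to proceed by case analysis on which of the three candidate points is chosen. By the symmetry of the game (swapping Alice and Bob exchanges $\{1,2\}$ and $\{1,3\}$ on Line $1$ while preserving the set $\{1,2,3\}$), I may assume without loss of generality that Alice makes the offending choice, $\{k,i\}$ with $k\in\{1,2,3\}$ and $k\neq i$.

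The main workhorse is the scenario in which Alice receives Line $i$ and Bob receives Line $1$, for which the losing condition reduces to $|\{k,i\}\cap\{1,3\}|=1$. Inspecting the handful of possibilities for $(k,i)$ shows that this scenario already yields a loss (with $j=1$) whenever $k=3$ (intersection $\{3\}$, using $i\neq 1$), whenever $k=1$ and $i\neq 3$ (intersection $\{1\}$), or whenever $k=2$ and $i=3$ (intersection $\{3\}$). The only residual subcases are $k=1,\,i=3$ (intersection $\{1,3\}$, a win) and $k=2,\,i\geq 4$ (disjoint outputs, also a win).

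For the two residual subcases the key move is to split on the other party's output on Line $i$, which the hypothesis leaves unconstrained; write this output as $\{m,i\}$. If $m\in\{1,2\}$, then Alice's Line $1$ output $\{1,2\}$ meets $\{m,i\}$ in the singleton $\{m\}$, so the pair (Alice Line $1$, Bob Line $i$) is a loss with $j=1$. Otherwise $m\notin\{1,2\}$, hence $m\neq k$, so the outputs $\{k,i\}$ and $\{m,i\}$ on Line $i$ share exactly the element $i$, giving a loss on (Alice Line $i$, Bob Line $i$) with $j=i$. This disposes of both residual subcases.

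The main obstacle is precisely these two residual subcases; they are what force the conclusion to allow $j=i$ rather than only $j=1$. Beyond them the proof is a direct bookkeeping of intersections of two-element sets.
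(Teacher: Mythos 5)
Your proof is correct, but it takes a genuinely different route from the paper's. The paper's first move is to split on whether Alice and Bob \emph{agree} on Line~$i$: if they disagree, the pair (Line $i$, Line $i$) fails immediately with $j=i$; if they agree on $\{i,k\}$ with $k\in\{1,2,3\}$, the paper observes that at least one of $\{i,k\}\cap\{1,2\}$ or $\{i,k\}\cap\{1,3\}$ must be a singleton, so one of the two cross-line scenarios (Alice Line~$1$ with Bob Line~$i$, or Alice Line~$i$ with Bob Line~$1$) fails with $j=1$. You instead fix (WLOG, via the swap-and-relabel symmetry) \emph{which} party makes the offending choice, use the single cross-line scenario (Alice Line~$i$, Bob Line~$1$) as the primary filter, and then handle the two surviving subcases $(k,i)=(1,3)$ and $(k,i)=(2,i)$ with $i\ge 4$ by branching on the other party's Line-$i$ output. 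Both arguments are sound; the paper's agree/disagree dichotomy yields a shorter and more symmetric case structure (the key step is a one-line observation about the two intersections), while yours is somewhat more case-heavy because you only compare against one of Line~$1$'s outputs initially, deferring the comparison against $\{1,2\}$ and the same-line failure to the residual subcases. One remark: your WLOG via the symmetry $2\leftrightarrow 3$ also maps Line~$i$ to Line~$\sigma(i)$, so strictly it reduces the ``Bob offends on Line~$i$'' case to the ``Alice offends on Line~$\sigma(i)$'' case; this is harmless since you prove the statement uniformly over all $i\ge 2$, but it is worth noting that the symmetry does not fix $i$.
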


\begin{proof}
    If Alice and Bob choose different points when both given Line $i$, then they will fail in this case, therefore we may assume that they choose the same point $\{i,k\}$ with $k \in \{1,2,3\}$. However, one of the intersections $\{i,k\} \cap \{1,2\}$ or $\{i,k\} \cap \{1,3\}$ must have size $1$. Therefore, we see that there is indeed a failure involving Line $i$, and that the other line is either Line $1$ or Line $i$.
\end{proof}

Therefore, we see that Alice and Bob must fail at least three times: once when both given Line $1$, once for some choice including Line $2$ and Line $1$ or $2$, and once for some choice including Line $3$ and Line $1$ or $3$.

Clearly, then, in order to have fewer than $4$ failures, Alice and Bob must use a shared strategy for the lines $\{4,\dots, N\}$, and for each of these lines they must not choose a point that intersects $\{1,2,3\}$. By the same token, for each Line $i \in \{1,2,3\}$, any choice of point $\{i,j\}$ with $j \notin \{1,2,3\}$ for either Alice or Bob would lead to a failure when the other is given Line $j$. Therefore, we see that each choice of both Alice and Bob from lines $1,2,3$ must be a subset of $\{1,2,3\}$. Then we immediately have three failures: when Alice and Bob are both given Line $1$, when Alice is given Line $1$ and Bob is given Line $3$, and when Alice is given Line $2$ and Bob is given Line~$1$. Therefore, they must use a shared strategy for Lines $2$ and $3$ in order to avoid a fourth failure. One may quickly check by brute force that any such strategy does lead to a fourth failure. Finally, we conclude that an optimal classical strategy for Alice and Bob is the aforementioned shared strategy leading to exactly $4$ failures.

On the other hand, we find that the quantum winning probability is 
\begin{equation}
    \frac{(d+1)P_1 + ((d+1)^2-(d+1))P_2}{(d+1)^2},
\end{equation}
which is higher than the classical winning probability when
\begin{equation}
    V_d \geq \frac{d^2 - d - 1}{(d+1)(d-1)}.
\end{equation}
For $d=6,8,10$, we thus need $V_d$ to be at least $29/35 \approx 0.829$, $55/63 \approx 0.873$, and $89/99 \approx 0.899$, respectively. 


\subsubsection{Visibility for the non-colored odd-star point-line game}


Here we consider the classical KS game, where Alice may receive any of the $N$ lines as input, and Bob may receive any of the $N-1$ points on Alice's line as input. In this case Alice chooses any of the $N-1$ points on the line and indicates this by labeling the point with a $1$, and labeling the remaining points on the line with $0$ (therefore Alice has $N-1$ choices of output). Bob then labels his point with either $0$ or $1$ (therefore Bob has $2$ choices of output). They win if and only if they agree on Bob's point. We call this instance of the classical KS game the \emph{non-colored odd-star point-line game}.

In the classical KS game, Alice and Bob can win with probability $1 - \frac{1}{d(d+1)}$ as follows. For each odd $1 \leq i < N$, Alice labels with $1$ the point $\{i,i+1\}$ with $1$ for Line $i$ and Line $i+1$. Then, for the final line, Alice can choose any point to label with $1$. With the exception of the final point, Bob also labels with a $1$ each of Alice's points that are labeled with $1$. Bob labels the final point with a $0$. Then the only losing combination is if Alice is given Line $N$ and Bob is given Alice's final point. This strategy is optimal since there is no perfect classical strategy. Here, the quantum value is easy to compute: it is given by
\begin{align*}
    &P(A=1, B=1 | \rho) + P(A=0, B=0 | \rho) \\
    = &\frac{d^2 - 2(d-1)(1-V_d)}{d^2}
\end{align*}
when Alice and Bob both use $v$ to measure, and so in this case, we find that the quantum winning probability is higher than the classical winning probability when
\begin{equation}
    V_d \geq \frac{2d^2-d-2}{2(d+1)(d-1)},
\end{equation}
which, for $d=6,8,10$, yields $32/35 \approx 0.914$, $59/63 \approx 0.937$, and $94/99 \approx 0.949$, respectively.


\section{Summary and open problems}
\label{sec:conc}


In this article, we have introduced two related results. On the one hand, we have extended a family of KS sets in dimension $d$, with $d+1$ orthogonal bases. Now the family covers any:
\begin{itemize}
    \item $d = 2^kp^m$ for $p$ prime, $m \geq k \geq 0$ (with the exception of $k=m=0$),
    \item $d = 8p$ for $p \geq 19$,
    \item $d = pk$ whenever there exists a classical Hadamard matrix of order $k$ and we have $p > ((k-2)2^{k-2})^2$,
    \item any $d$ large enough, satisfying the congruence $d+1 \equiv d'+1 \mod (d'+1)(d'))$ for some $d'$ already covered by the family. For example, any $d > 294427, d \equiv 7 \mod 42$,
    \item any dimension $d$ that may be constructed using \cite[Theorems 5.11 and 5.12]{Colbourn:1995},
    \item any $d$ satisfying $d+1 = (d'+1)^m$ where $d'$ is already covered, $d'+1$ is a prime power, and $m$ is any positive integer,
    \item $d=8$,
    \item other sporadic examples.
\end{itemize}
 On the other hand, we have shown how to distribute each of these KS sets between two parties and produce a PQS in such a way that the product of the number of settings of Alice times the number of settings of Bob is minimum. As a result, we have presented a family of PQSs in any $(2,d-1,d)$ Bell scenario. The KS sets are the ones having the smallest number of bases known. For dimensions different than $4^N$ and $8^N$ (which correspond to the parallel version of the games based on magic sets for two and three qubits), the corresponding PQSs are the ones with the smallest number of settings known.

Here, we list here some problems for future research:
\begin{itemize}
 \item Find ORs of dimension $d$ for the $J(d+1,2)$ for the even dimensions not covered by the results of this paper. For $d = 4$, it is easy to see that such ORs do not exist \cite{LisonekPRA2014}. The first case that remains open is $d=32$. It has been conjectured that for each even dimension $d$ and $p$ prime dividing $d$, there is a generalized Hadamard matrix $GH(p,\lambda)$ with $p\lambda = d$ \cite[Conjecture 5.18 (3)]{Colbourn:1995}. This would resolve each case when the dimension that is not a power of $2$ (since Lison\v ek's construction does not work for $p=2$). The smallest open case is dimension $30$ with $p=5$.
\item Identify $(2,d-1,d)$ PQSs for {\em odd} $d \ge 5$. For, $d=3$, it can be proven that $(2,2,3)$ PQSs do not exist \cite{Gisin:2007IJQI}. Moreover, the simplest qutrit-qutrit PQS known requires $9$ settings in one party and $7$ settings in the other \cite{Cabello:2023XXX}.
\item An open problem is, given $d$, what is the bipartite PQS with the smallest input cardinality. For $d=3$ and $4$, it has been conjectured that the solution is $(2,9-7,3)$ and $(2,3,4)$, respectively \cite{Cabello:2023XXX}. For $d=4^N$, the PQSs with the smallest known input cardinality, $3^N$ for both parties, are achieved by taking $N$ parallel repetitions of the magic square correlations. For $d=8^N$, the parallelization of the Mermin star \cite{Mermin:1990PRLb} game due to Mancinska \cite{Mancinska} has input cardinality $5^N$ for both parties. In fact, this may be reduced to $5^N$ for Alice and $4^N$ for Bob by noting that in the original formulation of the game, one can remove one of the inputs from either player while still not allowing a perfect classical strategy. This seems to be the smallest known input cardinality in dimension $8^N$.
Of course, there are ``trivial'' PQSs in any $(2,n \ge 3,d \ge 4)$ Bell scenario, so the question is rather what is the bipartite PQS with the smallest input cardinality for the corresponding $d$ in the sense that it cannot be traced back to a PQS of smallest $d$ as in, e.g., \cite{Arkhipov:2012XXX}, or in the previous examples. 

 \item We noted in Sec.~\ref{sec.Optimality} that the colored odd-pointed star game does not depend on the KS set given, only on the orthogonalities of the $J(N,2)$ graph. 
 In principle, by taking account the additional orthogonalities of the KS set, the classical winning probability could potentially be lower. More formally, for some KS set $\mathcal{K} = (\mathcal{V}, \mathcal{B})$, one can play the following variation of the colored odd-pointed star game. Give Alice an orthogonal basis $b \in S_A$ and Bob an orthogonal basis $b' \in S_B$ for bipartite KS $(S_A,S_B)$. Alice and Bob choose vectors $v \in b, v' \in b'$. They win if and only if $v$ and $v'$ are not orthogonal. Investigating these variations may be of interest.
    \item For the KS sets of Sec.~\ref{sec.InfiniteClass}, the classical value of this game (with $S_A, S_B$ chosen as in the colored odd-pointed star game) is at most the classical value of the colored odd-pointed star game. However, even with the additional orthogonalities, the classical value may be the same. It is then natural to ask if there is
good way to modify the KS sets of Sec.~\ref{sec.InfiniteClass} (by adding additional vectors) in order to ensure that the classical value is lowered. Such constructions would increase the resistance to noise. This connects to the most general open problem of constructing bipartite PQSs with larger resistance to noise.
\end{itemize}


\section*{Acknowledgments}


{This work was supported by EU-funded project \href{10.3030/101070558}{FoQaCiA}. A.C.\ acknowledges support from the \href{10.13039/501100011033}{MCINN/AEI} (Project No.\ PID2020-113738GB-I00).
}

\bibliographystyle{quantum}

\appendix


\section{FN, FNS, AVN} \label{app.Correlationstrength}


Here, we present the definitions of FN correlations, FNS correlations, and AVN proofs of Bell theorem.

Consider the $(|X|, |A|; |Y|, |B|)$ Bell scenario, where Alice has $|X|$ settings with $|A|$ outcomes, and Bob has $|Y|$ settings with $|B|$ outcomes. A correlation $p(a,b|x,y)$ may be viewed as a point in $\mathbb{R}^{|A||B||X||Y|}$. For any choice of $A,B,X,Y$, we obtain different sets of correlations by considering different theories.

A correlation $p(a,b|x,y)$ is  \emph{deterministic} if each output is a function of the corresponding input, i.e., $a = f(x), b = g(y)$ for some deterministic functions $f,g$. A \emph{local correlation} is any correlation that can be written as convex combination of deterministic correlations. Therefore, the set of local correlations forms a polytope, called the \emph{local polytope}, whose vertices are  the deterministic correlations.

A correlation $p(a,b|x,y)$ is {\em quantum} if it can be realized as
\begin{equation}
    p(a,b|x,y) = \tr[(M_{a|x} \otimes M_{b|y})\rho]
\end{equation}
for some quantum state $\rho$ and (POVM) measurements $M_{a|x}$ and $M_{b|y}$. 

A correlation $p(a,b|x,y)$ is \emph{non-signaling} (NS) if, for every $a,x,y,y'$,
\begin{equation}
    \sum_{b} p(a,b | x,y) = \sum_{b} p(a,b|x,y')
\end{equation}
and, for every $b,x,x',y$,
\begin{equation}
    \sum_{a} p(a,b | x,y) = \sum_{a} p(a,b|x',y).
\end{equation}
The set of all NS correlations is a polytope, the \emph{NS polytope}. 

The local polytope is a subset of the set of quantum correlations, which in turn, is a convex subset of the NS polytope \cite{Brunner:2014RMP}.


{\em Full nonlocality:} Any NS correlation $p(a,b|x,y)$ can be decomposed as
\begin{equation}
\label{deco}
p(a,b|x,y) = q_{\mathrm{L}} p_{\mathrm{L}}(a,b|x,y) + (1 - q_{\mathrm{L}})p_{\mathrm{NL}}(a,b|x,y),
\end{equation}
where $p_{\mathrm{L}}(a,b|x,y)$ is a local correlation and $p_{\mathrm{NL}}(a,b|x,y)$ is a nonlocal NS correlation, with respective weights $q_{\mathrm{L}}$ (called the \emph{local weight}) and $1 - q_{\mathrm{L}}$ (called the \emph{nonlocal weight}), for some $0\leq q_{\mathrm{L}} \leq 1$. The \emph{local content} or \emph{local fraction} $q_{\mathrm{L}}^{\mathrm{max}}$ of $p(a,b|x,y)$ is the largest possible local weight over all such decompositions \cite{Elitzur:1992PLA}. Formally, $q_{\mathrm{L}}^{\mathrm{max}} := \max_{\{p_{\mathrm{L}},p_{\mathrm{NL}}\}} q_{\mathrm{L}}$. Conversely, the \emph{nonlocal content} is the smallest possible nonlocal weight -- thus, $q_{\mathrm{NL}}^{\mathrm{min}} := 1-q_{\mathrm{L}}^{\mathrm{max}}$.
The correlation $p(a,b|x,y)$ is local if and only if the nonlocal content is zero (i.e., $q_{\mathrm{NL}}^{\mathrm{min}}=0$). 
A correlation $p(a,b|x,y)$ is {\em fully nonlocal} (FN) \cite{Aolita:2012PRA} or {\em strongly nonlocal} \cite{Abramsky2019Pro} whenever $q_{\mathrm{NL}}^{\mathrm{min}}=1$.


{\em Face nonsignaling correlations:} A nonlocal correlation $p(a,b|x,y)$ is a point in the NS polytope that lies outside of the local polytope \cite{Pitowsky:1989}. A nonlocal correlation $p(a,b|x,y)$ is \emph{face nonsignaling} (FNS) if it is on a face of the NS polytope containing no local points.


{\em All-versus-nothing proofs:} 
The \emph{support} $s$ of a correlation $p(a,b|x,y)$ is the set of indices for which $p(a,b|x,y)$ is non-zero. An \emph{all-versus-nothing (AVN) proof} of Bell theorem \cite{Mermin:1990PRLa} is a quantum correlation for which there is no local correlation whose support is a subset of $s$.


\section{Connection between perfect quantum strategies and Kochen-Specker sets} \label{app.connection}


Here, we summarize the one-to-one connection between PQSs and KS sets proven in \cite{Cabello:2023XXX}. The definition of KS set is in Sec.~\ref{sec.N=7} (see Definition~\ref{def:ks}). The definition of B-KS set is in Sec.~\ref{sec.Optimality} (see Definition~\ref{def:bks}).

Every B-KS set defines a PQS for the following bipartite game: Alice (Bob) receives an orthogonal basis $x$ ($y$) chosen at random from $S_A$ ($S_B$) in Definition~\ref{def:bks}. Alice (Bob) outputs a vector of $x$ ($y$). Alice and Bob win if and only if they output non-orthogonal vectors. 
The PQS is realized as follows: Alice and Bob share a qudit-qudit maximally entangled state in dimension $d$. For each $x$ ($y$), Alice (Bob) measures the orthogonal projectors onto the vectors $\{s_{a|x}\}_a$ ($\{s_{b|y}\}_b$) and outputs the vector associated to the result that she (he) has obtained.

Reciprocally, every bipartite PQS defines a B-KS set. 
Every PQS consists of a quantum correlation. As shown in, e.g., \cite{Brunner:2014RMP,Paddock:2022XXX}, every quantum correlation can be realized as
\begin{equation}
    p(a,b|x,y) = \langle \psi | \Pi_{a|x} \otimes \Pi_{b|y} | \psi \rangle,
\end{equation}
where $\ket{\psi} \in \mathcal{H} = \mathcal{H}_A \otimes \mathcal{H}_B$ is a pure state, $\Pi_{a|x}$ are projective measurements for Alice (i.e., $\Pi_{a|x} \Pi_{a'|x}= \delta_{a a'} \Pi_{a|x}$ and $\sum_{a} \Pi_{a|x} = I_A$, where $I_A$ is the identity matrix in $\mathcal{H}_A$), and $\Pi_{b|y}$ are projective measurements for Bob (i.e., $\Pi_{b|y} \Pi_{b'|y}= \delta_{b b'} \Pi_{b|y}$ and $\sum_{b} \Pi_{b|y} = I_B$, where $I_B$ is the identity matrix in $\mathcal{H}_B$). 
For each $a|x$, we define 
\begin{equation}
    \ket{\psi_{a|x}} :=\frac{(\Pi_{a|x} \otimes I_B) |\psi\rangle}{\sqrt{\langle \psi | (\Pi_{a|x} \otimes I_B) | \psi \rangle}}.
\end{equation}
Similarly, for each $b|y$, we define 
\begin{equation}
    \ket{\psi_{b|y}} :=\frac{(I_A \otimes \Pi_{b|y}) |\psi\rangle}{\sqrt{\langle \psi | (I_A \otimes \Pi_{b|y}) | \psi \rangle}}.
\end{equation}
Then, the following result is proven in \cite{Cabello:2023XXX}.

\begin{theorem} \label{mt}
    A correlation $p(a,b|x,y)$ allows for a PQS in the $(|X|, |A|; |Y|, |B|)$ Bell scenario if and only if there is a B-KS set $S=S_A \cup S_B$, where $S_A := \{\ket{\psi_{a|x}} : a \in A, x \in X\}$ and $S_B := \{\ket{\psi_{b|y}} : b \in B, y \in Y\}$.
\end{theorem}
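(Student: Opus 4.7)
The proof naturally splits into the two directions, and both rest on the single identity
\begin{equation}
\langle \psi_{a|x} | \psi_{b|y}\rangle = \frac{p(a,b|x,y)}{\sqrt{p(a|x)\,p(b|y)}},
\end{equation}
which follows immediately from expanding $\langle\psi|(\Pi_{a|x}\otimes I_B)(I_A\otimes\Pi_{b|y})|\psi\rangle$ and using $\Pi_{a|x}^{2}=\Pi_{a|x}$, $\Pi_{b|y}^{2}=\Pi_{b|y}$. In particular, $p(a,b|x,y)=0$ if and only if $\ket{\psi_{a|x}}\perp\ket{\psi_{b|y}}$.

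For ($\Rightarrow$), suppose $p$ is a PQS for a game $G=(X\times Y,A\times B,\pi,W)$ with no perfect classical strategy. I proceed by contraposition: assume $(S_A,S_B)$ is \emph{not} a B-KS set and exhibit a perfect classical strategy for $G$. From a classical assignment $(f_A,f_B)$ of $(S_A,S_B)$, extract the deterministic functions $x\mapsto a^{*}(x)$ and $y\mapsto b^{*}(y)$ selecting the unique outputs on which $f_A,f_B$ take the value $1$. For any $(x,y)$, if $W(a^{*}(x),b^{*}(y),x,y)=0$, the PQS property forces $p(a^{*}(x),b^{*}(y)|x,y)=0$; by the identity above this yields $\ket{\psi_{a^{*}(x)|x}}\perp\ket{\psi_{b^{*}(y)|y}}$, violating the consistency condition of the classical assignment. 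Hence $W(a^{*}(x),b^{*}(y),x,y)=1$ for every $(x,y)$, the strategy is perfect, and we have the desired contradiction.

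For ($\Leftarrow$), given a B-KS set $(S_A,S_B)$ of orthogonal bases of $\mathbb{C}^{d}$, I build a PQS by the standard recipe. Alice and Bob share $\ket{\psi}=\frac{1}{\sqrt{d}}\sum_{k}\ket{k}\ket{k}$; on input $x\in S_A$ Alice measures the rank-one projectors $\{\ket{s_{a|x}}\!\bra{s_{a|x}}\}_{a}$, and on input $y\in S_B$ Bob measures the projectors onto the conjugate basis $\{\ket{s_{b|y}^{*}}\}_{b}$. The transpose trick $(A\otimes I)\ket{\psi}=(I\otimes A^{T})\ket{\psi}$ for the maximally entangled state gives
\begin{equation}
p(a,b|x,y)=\frac{1}{d}\,|\langle s_{a|x}|s_{b|y}\rangle|^{2}.
\end{equation}
Define $G$ with $W(a,b,x,y)=1$ iff $\langle s_{a|x}|s_{b|y}\rangle\neq 0$. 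Then $p$ wins $G$ with probability $1$ by construction, while any perfect classical strategy for $G$ would directly produce a classical assignment of $(S_A,S_B)$, contradicting the B-KS property. A short direct computation using the explicit form of $\ket{\psi}$ shows that the post-measurement states produced by this model, $\ket{\psi_{a|x}}=\ket{s_{a|x}}\otimes\ket{s_{a|x}^{*}}$ and $\ket{\psi_{b|y}}=\ket{s_{b|y}^{*}}\otimes\ket{s_{b|y}}$, reproduce $S_A$ and $S_B$ up to the obvious embedding $\ket{s}\mapsto\ket{s}\otimes\ket{s^{*}}$.

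The main technical obstacle lies outside the algebra, which collapses to the single inner-product identity above. In ($\Rightarrow$) one needs the given correlation to admit a model with a pure joint state and projective measurements, so that the vectors $\ket{\psi_{a|x}}$ are well defined at all; this is handled by standard purification and Naimark dilations, invoking the references already used in the paper (Brunner \emph{et al.}\ and Paddock). One also needs to fix the reading of ``allows for a PQS'': the theorem is meaningful only when the witnessing game has classical value strictly below $1$, since otherwise both sides of the equivalence degenerate. Once these conventions are in place, the two directions are genuinely dual---($\Rightarrow$) reads a classical assignment as a winning classical strategy, while ($\Leftarrow$) reads a winning classical strategy as a classical assignment---with the orthogonality-zero correspondence serving as the single bridge in both cases.
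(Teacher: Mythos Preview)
The paper does not actually prove this theorem; Appendix~B states it and attributes the proof to the external reference~\cite{Cabello:2023XXX}. What Appendix~B does supply is the setup (reduction to a pure state with projective measurements, the definitions of $\ket{\psi_{a|x}}$ and $\ket{\psi_{b|y}}$) together with a description of the $(\Leftarrow)$ construction: share the maximally entangled state, have each party measure in the given orthogonal basis, and declare a win iff the two output vectors are non-orthogonal. Your argument is fully consistent with that sketch, and your central identity $\langle\psi_{a|x}|\psi_{b|y}\rangle=p(a,b|x,y)/\sqrt{p(a|x)p(b|y)}$ is exactly the bridge one needs: it makes the equivalence ``$p(a,b|x,y)=0\Leftrightarrow\ket{\psi_{a|x}}\perp\ket{\psi_{b|y}}$'' immediate, after which both directions become the duality you describe between classical assignments and deterministic winning strategies.

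Two small points. First, in your $(\Leftarrow)$ computation the tensor factors of $\ket{\psi_{b|y}}$ are swapped: with Bob measuring $\ket{s_{b|y}^{*}}\!\bra{s_{b|y}^{*}}$ on the second register of $d^{-1/2}\sum_k\ket{k}\ket{k}$, one obtains $\ket{\psi_{b|y}}=\ket{s_{b|y}}\otimes\ket{s_{b|y}^{*}}$, the same form as Alice's vector. This is harmless---indeed it is what you need for the single embedding $\ket{s}\mapsto\ket{s}\otimes\ket{s^{*}}$ to apply uniformly to both parties and to preserve the B-KS property (since $\langle s\otimes s^{*}|t\otimes t^{*}\rangle=|\langle s|t\rangle|^{2}$). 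Second, in the $(\Rightarrow)$ direction the inference ``$W(a^{*}(x),b^{*}(y),x,y)=0\Rightarrow p(a^{*}(x),b^{*}(y)|x,y)=0$'' uses that $\pi(x,y)>0$ for every input pair; this full-support convention is standard for PQS and is implicit in the paper, but it is worth making explicit.
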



\end{document}